\documentclass[letterpaper]{article} 
\usepackage{aaai24}  
\usepackage{times}  
\usepackage{helvet}  
\usepackage{courier}  
\usepackage[hyphens]{url}  
\usepackage{graphicx} 
\urlstyle{rm} 
  
\usepackage[numbers]{natbib}  
\usepackage{caption} 
\frenchspacing  
\setlength{\pdfpagewidth}{8.5in} 
\setlength{\pdfpageheight}{11in}

\usepackage{algorithm}
\usepackage{algorithmic}

\usepackage{newfloat}
\usepackage{listings}
\DeclareCaptionStyle{ruled}{labelfont=normalfont,labelsep=colon,strut=off} 
\lstset{
	basicstyle={\footnotesize\ttfamily},
	numbers=left,numberstyle=\footnotesize,xleftmargin=2em,
	aboveskip=0pt,belowskip=0pt,
	showstringspaces=false,tabsize=2,breaklines=true}
\floatstyle{ruled}
\newfloat{listing}{tb}{lst}{}
\floatname{listing}{Listing}

\pdfinfo{
/TemplateVersion (2024.1)
}

\setcounter{secnumdepth}{2}

\usepackage{url}            
\usepackage{booktabs}       
\usepackage{amsfonts}       
\usepackage{nicefrac}       
\usepackage{microtype}      
\usepackage{xcolor}

\usepackage{microtype,comment}

\usepackage{graphicx}

\graphicspath{ {./tree_plots/}{./images/} }

\usepackage{subfig}
\usepackage{booktabs}

\usepackage{amsmath}
\usepackage{amssymb}
\usepackage{mathtools}
\usepackage{amsthm}
\allowdisplaybreaks

\usepackage{mathrsfs}

\usepackage[capitalize,noabbrev]{cleveref}

\theoremstyle{plain}
\newtheorem{theorem}{Theorem}[section]
\newtheorem{proposition}[theorem]{Proposition}
\newtheorem{lemma}[theorem]{Lemma}

\newtheorem{condition}[theorem]{Condition}
\theoremstyle{definition}
\newtheorem{definition}[theorem]{Definition}
\newtheorem{assumption}[theorem]{Assumption}
\theoremstyle{remark}

\usepackage[textsize=tiny]{todonotes}

\usepackage{url}            
\usepackage{booktabs}       

\usepackage{nicefrac}       
\usepackage{microtype}      
\usepackage{xcolor}

\usepackage{microtype}
\usepackage{graphicx}
\usepackage{makecell}
\usepackage{caption}
\usepackage{multirow}

\usepackage{colortbl}
\definecolor{bgcolor}{rgb}{0.66,0.88,1.00}

\usepackage{tablefootnote}
\usepackage{threeparttable}
\usepackage{algorithm}
\usepackage{algorithmic}

\usepackage{xspace}

\xspaceaddexceptions{[]\{\}}

\usepackage{thm-restate}

\usepackage{eqparbox}


\usepackage{amsmath,amsfonts,bm}









\def\eqref#1{equation~\ref{#1}}









\def\1{\bm{1}}










\DeclareMathAlphabet{\mathsfit}{\encodingdefault}{\sfdefault}{m}{sl}
\SetMathAlphabet{\mathsfit}{bold}{\encodingdefault}{\sfdefault}{bx}{n}

\def\gA{{\mathcal{A}}}

\def\gC{{\mathcal{C}}}

\def\gQ{{\mathcal{Q}}}














\definecolor{darkgreen}{rgb}{0,0.5,0}
\definecolor{darkred}{rgb}{0.7,0,0}
\definecolor{teal}{rgb}{0.3,0.8,0.8}
\definecolor{orange}{rgb}{1.0,0.5,0.0}
\definecolor{purple}{rgb}{0.8,0.0,0.8}
\definecolor{OliveGreen}{rgb}{0.7,0.7,0.3}

\newcommand{\rbra}[1]{\left( #1 \right)}
\newcommand{\sbra}[1]{\left[ #1 \right]}

\newcommand{\abs}[1]{\left| #1 \right|}

\usepackage{listings}
\usepackage{textcomp}
\usepackage{comment}

\title{Tree Search-Based Evolutionary Bandits for Protein Sequence Optimization}

\author{
    Jiahao Qiu\equalcontrib\textsuperscript{\rm 1},
    Hui Yuan\equalcontrib\textsuperscript{\rm 1},
    Jinghong Zhang\equalcontrib\textsuperscript{\rm 2},
    Wentao Chen\textsuperscript{\rm 3},
    Huazheng Wang\textsuperscript{\rm 4},
    Mengdi Wang\textsuperscript{\rm 1}
}
\affiliations{
    
    \textsuperscript{\rm 1}Princeton University\\
    \textsuperscript{\rm 2}University of California San Diego\\
    \textsuperscript{\rm 3}MLAB Biosciences Inc\\
    \textsuperscript{\rm 4}Oregon State University\\
    jq3984@princeton.edu, huiyuan@princeton.edu, zhangjinghong99@outlook.com, wentao.chen@mlabbiosciences.com, huazheng.wang@oregonstate.edu, mengdiw@princeton.edu

}

\begin{document}

\maketitle

\begin{abstract}
  While modern biotechnologies allow synthesizing new proteins and function measurements at scale, efficiently exploring a protein sequence space and engineering it remains a daunting task due to the vast sequence space of any given protein. Protein engineering is typically conducted through an iterative process of adding mutations to the wild-type or lead sequences, recombination of mutations, and running new rounds of screening. To enhance the efficiency of such a process, we propose a tree search-based bandit learning method, which expands a tree starting from the initial sequence with the guidance of a bandit machine learning model. Under simplified assumptions and a Gaussian Process prior, we provide theoretical analysis and a Bayesian regret bound, demonstrating that the combination of local search and bandit learning method can efficiently discover a near-optimal design. The full algorithm is compatible with a suite of randomized tree search heuristics, machine learning models, pre-trained embeddings, and bandit techniques. We test various instances of the algorithm across benchmark protein datasets using simulated screens. Experiment results demonstrate that the algorithm is both sample-efficient and able to find top designs using reasonably small mutation counts.
 
\end{abstract}

\section{Introduction}



Advances in biotechnology have demonstrated human's unprecedented capabilities to engineer proteins. They make it possible to directly design the amino acid sequences that encode proteins for desired functions, towards improving biochemical or enzymatic properties such as stability, binding affinity, or catalytic activity. 
Directed evolution (DE), for example, is a method for exploring new protein designs with properties of interest and maximal utility, by mimicking the natural evolution process. The development of DE was honored in 2018 with the awarding of the Nobel Prize in Chemistry to Frances Arnold for the directed evolution of enzymes, and George Smith and Gregory Winter for the development of phage display \citep{arnold1998design, smith1997phage, winter1994making}. Traditional DE strategies are inherently screening (greedy search) strategies with limited ability to generate high-quality data for probing the full sequence-function relationships. Recent advances in synthetic DNA generation and recombinant protein production make the measurement of protein sequence-function relationships reasonably scalable and high-throughput \citep{packer2015methods, yang2019machine}.



\begin{figure*}[h!]
 \vspace{-3mm}
  \centering \includegraphics[height=0.22\textwidth, width=0.7\textwidth]{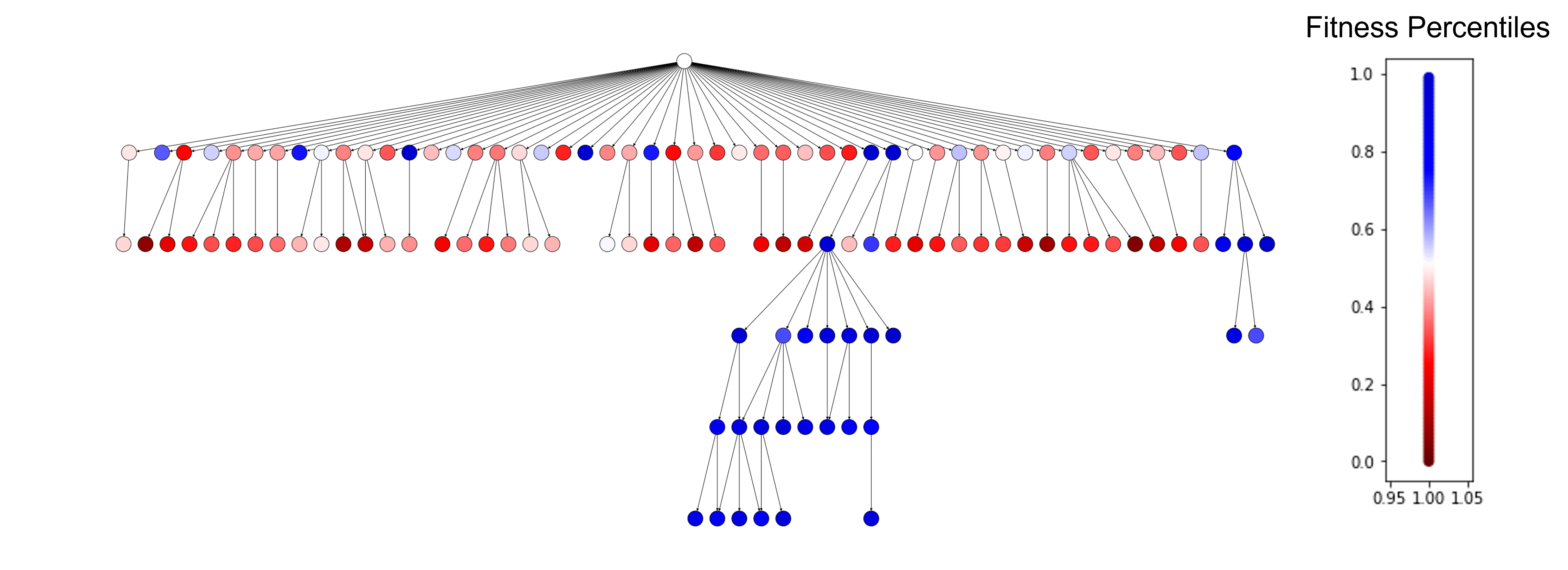}
  \caption{A Tree visualization of AAV screen dataset \citep{Bryant2021}, generated by starting from the wild-type and building the tree via downsampling children with an editing distance of 1 from the parent. The wet-lab screen initiates with a wild-type design sequence (root node), and in each round new sequences are generated by adding randomization and keeping those with high fitness scores as parents. It was believed that nodes with high fitness are more likely to generate high-fitness children.  \label{tree-data}}
  \vspace{-6mm}
\end{figure*}





Due to the bottleneck of wet-lab experimentation and the complex landscape of protein functions, identifying novel protein designs for maximal fitness remains one of the most difficult but high-value problems in modern medicine and biology. 
This has motivated scientists to apply machine learning approaches, beginning with \citet{fox2003optimizing} and followed by many, with increasing amounts of efforts utilizing \textit{in silico} exploration and machine learning beyond experimental approaches \citep{yang2019machine, fannjiang2020autofocused, doppa2021adaptive, shin2021protein, freschlin2022machine,wang2022neural,sinai2020adalead}. More recent advances in large language models open up new opportunities for modeling and predicting protein functions and generalizing knowledge across protein domains \citep{rives2021biological, Shuai2021.12.13.472419, nijkamp2022progen2, Hsu2022Learning, Elnaggar2020ProtTrans}. 


The key research challenge with designing the iterative protein screening strategy is {\it exploration}, i.e., how to effectively explore in a large combinatorial space and learns the sequence-to-function landscape towards finding the optimal. While many attempts have been proved successful in simulation and sometimes in real experiments \citep{Bryant2021, shin2021protein}, they often are limited by practical constraints and their performance is very sensitive to domain/distribution shifts. Even with the best and largest pre-trained protein language models such as ESM-1b \cite{rives2019biological} and ProGen2 \cite{nijkamp2022progen2}, one often needs to explore an almost unknown domain and learn a new function map in order to discover new drugs. This is especially true with antibody engineering. Antibodies have highly diverse complementarity-determining region (CDR) sequences that can be altered, resulting in a huge sequence space to explore for optimal properties. The binding of antibodies to their targets are extrinsic properties of antibodies and it is difficult to accurately model the sequence-binding relationships solely from the sequences alone. Further, most of the exploration strategies used in practice lack theoretical guarantees.



\paragraph{Practical considerations in protein screens and a tree search view}

Protein engineering is typically done through trial-and-error approaches in altering the primary sequence by mutations or changing the length of certain regions. More high-throughput approaches (e.g. {\it in vitro} display systems) involve randomizing the amino acids for the positions-of-interest or region-of-interest. To obtain the optimal properties for a protein, the directed evolution approach is typically used by exploring a limited sequence space in each round of engineering, and starting the next round of engineering from the best one or the best few sequences in the previous round, in an iterative manner \cite{wu2019machine}. A typical protein engineering workflow by producing purified proteins is limited to up to a few hundred sequences due to the throughput limitations. {\it In vitro} display systems, on the other hand, can be used to screen millions of sequences, although the data generation (labeled data of sequence-function relationships) throughput is much smaller. 

For practical reasons, especially for therapeutic proteins, the choice of mutations is dependent on the knowhows of the protein engineer, and the number of mutations is kept low in order to prevent unexpected issues associated with decreased protein stability, compromised binding specificity, and immunogenicity. Adding too many mutations may also lead to distribution-shift and reduce the robustness of machine-learning models.  Thus, practitioners are reluctant to make large jumps in the screening/search process.

For example, \citep{Bryant2021} studied the engineering of Adeno-associated virus 2 capsid protein (AAV) and screened a total of 201,426 variants of the AAV2 wild-type (WT) sequence. It screened all single mutations in the first round, then generated variants with $>1$ mutations via randomization and selection of high-value ones in later rounds. 
Such an iterative process mimics a tree search. To understand this process, we visualize those sequences from the dataset of \citep{Bryant2021} after downsampling in Figure \ref{tree-data}. We see that the screen data map nicely to a tree, where the root node corresponds to the wide-type AAV and mutations connect parent and child nodes. 
These observations are consistent with practical screening strategies that add mutation sequentially and search for better alternatives.
Note that the tree size grows exponentially as mutations are added. For the example of AAV, variants with up to $5$ mutations form a tree with $\sum_{i = 0}^5 \binom{28}{i} \cdot 19^i$ nodes. Thus even with a bounded number of mutations, the problem is prohibitively difficult.


\paragraph{Our approach} 
In order to make the screening process more efficient, we borrow ideas from both the protein engineering practices and recent advances in bandit machine learning, hoping to get the best from both worlds. We follow two principles for algorithm design: 
\begin{enumerate}
\item[(1)] We wish to largely follow a tree search process and identify optimal sequences with just a few mutations. As mentioned, practitioners are reluctant to make large jumps in the screening/search process.  Being a \emph{local search} strategy, tree search from lead sequences will keep total mutation counts small, which means better reliability of the found solution. Further, machine learning models for function prediction are more likely to generalize well to new designs that do not change too much from training data. 

\item[(2)] We employ bandit exploration techniques to guide the tree branching process. Instead of searching greedily using a learned prediction model, we hope to more aggressively search designs with higher uncertainty. Two main techniques in bandit learning are upper confidence bound (UCB) and posterior sampling (also known as Thompson Sampling, aka TS). We will incorporate these bandit techniques, leveraging pre-trained protein sequence embedding and neural networks, into tree search to enhance exploration.
\end{enumerate}

In this paper, we propose to combine tree search with bandit machine learning. We begin by presenting a meta-algorithm (Algorithm \ref{alg:meta}). It proceeds by mimicking the directed evolution process, growing a tree from the root node, and gradually expanding via mutation and recombination. It uses a pre-trained embedding and a machine learning model for predicting fitness, and during the search process, it adopts a bandit strategy to update the predictor and actively explore the tree. This meta-algorithm provides a versatile framework for analyzing exploration in sequence space.

\paragraph{Results}
For theoretical analysis, we study a Bayesian setting where the true function map has a Gaussian Process prior distribution. We also assume Lipschitz continuity of the embedding map and local convexity of the fitness function. Under these simplified conditions, we show that the meta-algorithm with GP bandit can provably identify the optimal sequence and achieves a regret $O \rbra{\gamma_T \sqrt{T}}$, where $\gamma_T$ is known as the maximal information gain. The theoretical analysis may apply to a broader class of bandit algorithms and be of independent interest.

Next, we fully develop the algorithm for numerical implementation, and make it compatible with a suite of bandit models including UCB and Thompson Sampling. We experiment with instances of the algorithm and compare with a variety of baselines, using simulation oracles trained from real-life protein function datasets AAV \cite{Bryant2021}, TEM \cite{Gonzalez2019-fr} and AAYL49 antibody \cite{Engelhart2022} datasets. Experiments results show that tree-based methods achieve top performances across benchmarks and can efficiently find near-optimal designs with single-digit mutation counts. 





\section{Related Work}\label{sec:related}

\paragraph{Protein engineering.}

The traditional DE works by artificially evolving a population of variants, via mutation and recombination, while constantly selecting high-potential variants \citep{chen1991enzyme, chen1993tuning, kuchner1997directed, hibbert2005directed, turner2009directed, packer2015methods}.
Many variations of DE methods allow targeted randomization of positions-of-interest or regions-of-interest.
It is also possible to synthesize specific variants and operate on the combinatorial space likewise with high-throughput method, at a  cost, and this allows directly applying a Gaussian process bandit algorithm \cite{romero2013navigating}. A preliminary verision of this paper appeared at \citep{wang2022neural}. See \cite{yang2019machine} for a high-level survey of machine learning-assisted protein engineering, and see \cite{fox2007improving, bedbrook2017machine} for more examples.

\paragraph{Search algorithms for protein sequence design.}
Researchers have tried out various machine learning-based methods for sequence optimization. Bayesian Optimization(BO)\cite{Mockus1989BayesianAT} is a classical method for optimizing protein sequences. We use the code developed by \citet{sinai2020adalead} who uses an ensemble of models for BO as one of the baselines. LaMBO\citep{pmlr-v162-stanton22a} is also a BO-based algorithm that supports both single-objective and multi-objective. Design by adaptive sampling (DbAS; \citet{brookes2018design}) and conditioning by adaptive sampling (CbAS; \citet{brookes2019conditioning}) use a probabilistic framework. DyNA PPO\cite{Angermueller2020Model-based} uses proximal policy optimization for sequence design. PEX MufacNet\cite{ren2022proximal} is a local search method based on the principle of proximal optimization. We discussed more about the search algorithms in Appendix \ref{sec:related_work_details}.

\paragraph{Bandit learning.}
Bandit is a well-studied framework for optimizing with uncertainty, powerful in balancing exploration-exploitation trade-off -- a core challenge in protein sequence optimization. Therefore, it is potential to study protein optimization from a bandit perspective and there is recent work \cite{yuan2022bandit} emerging from this middle ground.
To balance exploration and exploitation, typical strategies are being optimistic in the face of uncertainty (OFU) based on the upper confidence bound (UCB) \citet{abbasi2011improved} and Thompson Sampling (TS) \citet{russo2014learning}, which randomizes policies based on the posterior of the optimal policy.
Regret guarantees have been established for different function classes of rewards, starting from the line of linear bandits. It was shown for the $d$-dim linear model upper and lower regret bounds meet at  $\tilde O \left( d \sqrt{T} \right)$\citep{Auer02,li2010contextual,abbasi2011improved, li2010contextual, russo2014learning}, with extensions to reinforcement learning \citep{yang2020reinforcement,yang2019sample,li2022communication} and sparsity-aware or decentralized settings \citep{hao2020high,li2022communication}. 
Later on, results from linear bandits have been extended to kernelized/ Gaussian process (GP) bandits. With $\gamma_T$ defined as the maximal information gain and $d$ being the dim of actions, \cite{srinivas2009gaussian} showed an upper bound of $\tilde O \rbra{\sqrt{dT\gamma_T}}$ and one of $\tilde O \rbra{\gamma_T \sqrt{T}}$ in the agnostic setting achieved by GP-UCB. Exploded by a factor of $\sqrt{d}$, $\tilde O \rbra{\gamma_T \sqrt{dT}}$ regret was shown by \cite{chowdhury2017kernelized} for agnostic GP-TS.

Recently, there has been a growing interest in bridging the predictive power of deep neural networks with the exploration mechanism of bandit learning, which is known as neural bandits \citep{zhou2020neural,zhang2020neural,jacot2018neural,pan2022neuralbandits}. Building upon the neural
tangent kernel (NTK) technique to analyze deep neural network \cite{jacot2018neural}, NeuralUCB \citep{zhou2020neural} can be viewed as a
direct extension of kernel bandits \cite{srinivas2009gaussian}. \citet{zhang2020neural} proposed the Thompson Sampling version of neural bandits. \citet{pan2022neuralbandits} proposed NeuralLinUCB, which learns a deep representation to transform the raw feature vectors and
performs UCB-type exploration in the last linear layer of the neural network.

\section{Method}\label{sec:method}

\subsection{Problem Formulation}

Let $x\in\mathcal{X}$ denote an amino-acid sequence, and let $F(x)$ denote a function measuring the fitness of the sequence. In practice, a known embedding map $\phi: \mathcal{X} \to \mathbb{R}^{d}$, mapping a sequence $x$ to its embedding vector $\phi(x)$, is often utilized to model the fitness $F(x)$ as $f^{\star}(\phi(x))$. 
Thus, we formulate the sequence design problem as
\begin{equation}
\label{seq_obj_w_emb}
    \max_{x\in\mathcal{X}} F(x) := f^{\star}(\phi(x)),
\end{equation}
where $f^{\star}$ represents the unknown ground-truth function. 
Here $\mathcal{X}$ corresponds to the tree rooted at the $x^{wt}\in \mathcal{X}$ of a fixed depth.

The learning problem is to explore $\mathcal{X}$, conduct screens and collect data points of the form $(\phi(x), \tilde F(x))$, refine estimates of $f^{\star}$ in an iterative fashion. Here $\tilde F(x)$ represents a noisy measurement of the unknown true fitness $F(x)$.

The hardness of the problem is due to searching over the combinatorial space $\mathcal{X}$. Although we use an embedding map $\phi(x)$ to help learn $F$ more accurately, we still have to work with discrete sequences and cannot make jumps in the embedding space. This nature of discrete sequence optimization is in sharp contrast to typical bandit settings.

\subsection{Meta Algorithm}
We present a meta-algorithm that combines bandit machine learning with local tree search in Algorithm \ref{alg:meta}. Detailed implementation of the subroutine $\textsc{TreeSearch}$ (Alg. \ref{alg:cap} ) is delayed till Section \ref{sec:full_alg} and \ref{sec:exp}. 

\vspace*{-3mm}
\begin{algorithm}[H]

\caption{Meta algorithm}

\label{alg:meta}
\begin{algorithmic}[1]
\STATE \label{line:tree_search} \textbf{Input: }  wildtype $x^{wt}$, total rounds $T$
\STATE \textbf{Initialization: } Add $x^{wt}$ to active node set $\gA_0$ as root node. 
\FOR{$t=1,2,... T$}
        \STATE Update bandit model:
        return a scoring function $F_t(\cdot) := f_t (\phi(\cdot))$.
        \STATE   Tree search by Alg. \ref{alg:cap}: $\gA_t \leftarrow \textsc{TreeSearch}(\gA_{t-1}, F_t)$.

        \STATE Query the fitness $F(x)$ for some (or all) $x$'s in $\gA_t$ and append $\{(\phi(x), \tilde F(x))\}$ to the dataset.
        
\ENDFOR
\end{algorithmic}
\end{algorithm}
\vspace*{-4mm}

\begin{figure*}[!ht]
    \vspace{-2mm}
  \centering 
  \includegraphics[height=0.20\textwidth, width=0.9\textwidth]{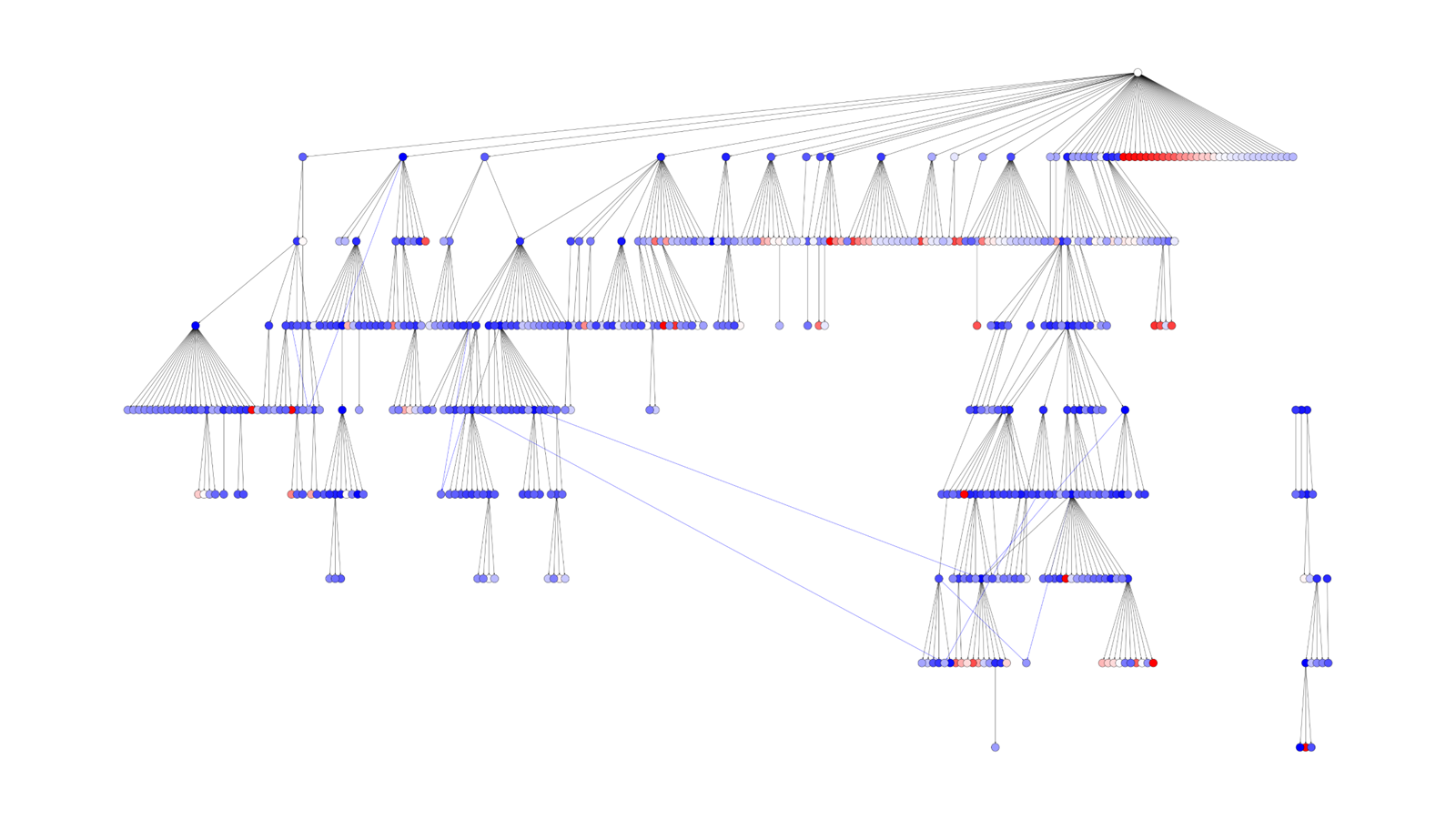}
  \vspace{-2mm}
  \caption{Visualization of tree search process of Algorithm \ref{alg:meta} using the AAV oracle. The search initiates with a wild-type sequence (the root note) and in each round, we choose 100 sequences generated from the last round according to scores derived from UCB and TS by single mutation which leads to a node in the next layer and recombination of sequences (shown by blue edges) which leads to a jump to a layer with more mutations. A path to the optimal sequence is shown by the bold line.}\label{fig:tree_visualization}
  \vspace{-5mm}
\end{figure*}

\subsubsection{Analysis of simple tree search}

Motivated by the practical observation that 
high-potential variants usually appear within a small count of mutations from a wildtype and the practical consideration that function approximation generalizes better in a local region where most data lies, we focus on a local searching region $\bar{\mathcal{X}} := \{x: d(x, x^{wt}) \leq N\} \subset \mathcal{X}$ and aim to identify the optimal sequences within $\bar{\mathcal{X}}$. Here $d(\cdot,\cdot)$ denotes the hamming distance between any two arbitrary sequences in $\mathcal{X}$ and $N$ controls the mutation counts (the depth of tree search). 

\paragraph{Bayesian regret}
Given the search region $\bar{\mathcal{X}}$, we measure the performance of an algorithm by its Bayesian regret defined as follows.
\begin{definition}
    \label{def:bayes_rgt}
    Suppose the algorithm finds a series of protein sequences $\{x_{t}\}_{t=1}^{T}$ within $\bar{\mathcal{X}}$, then its accumulated Bayesian regret is defined as
    \begin{equation}
        \operatorname{BayesRGT}(T) = \mathbb{E}\left[\sum_{t=1}^T \rbra{ \max_{x \in \bar{\mathcal{X}} } F(x) - F(x_{t})} \right],
    \end{equation}
    where $\mathbb{E}$ is taken over the prior distribution of $f^{\star}$ (which we will assume to be a Gaussian process in Section \ref{subsec:gp_theory}), and taken over other randomness in finding $\{x_{t}\}_{t=1}^{T}$ .
\end{definition}

\paragraph{Simple tree search}
In order to obtain a basic theoretical understanding of the tree search-based bandit learning process,
we consider a {\it simplified} mathematical abstraction of Alg. \ref{alg:meta}: suppose in each round $t$, protein $x_t$ is filtered out through the subroutine $\textsc{TreeSearch}(\gA_{t-1}, F_t)$, and $\{x_t\}_{t=1}^{T}$ satisfies the following condition. And it only collects $\{(\phi(x_t), \tilde F(x_t))\}$ in the active set for updating $f_t$.

\begin{condition} 
\label{cond:local_argmax}There exists $r>0$ such that each iteration of tree search is able to find a sequence that has equal or better $F_t$ value than the local maximum within radius $r$ in the embedding space, i.e.,
\begin{equation}
    F_t( x_{t} ) \geq \max \{ F_t(x) \mid x: \|\phi(x)-\phi(x_{t-1})\| \leq r\}.
\end{equation}
\end{condition}

This condition shows that the local search method makes sufficient improvement w.r.t. $F_t$ per iteration. In practice, this condition can be satisfied by tuning parameters and stopping conditions of the tree search subroutine.
To understand this, when $\phi$ is Lipschitz, with notation $\phi(\bar{\mathcal{X}}): = \{\phi(x): x \in \bar{\mathcal{X}}\}$, we have
$\phi(\bar{\mathcal{X}}) \subset \mathcal{D}:= \{\phi(x): \| \phi(x) - \phi(x^{wt})\| \leq R:= L_{\phi} N\}$, $\mathcal{D}$ is the local region around $\phi(x^{wt})$ in the embedding space.
Thus in practice, if a good embedding map $\phi$ successfully captures the latent structure of $\mathcal{X}$ such that $\phi(\bar{\mathcal{X}})$ spans well in $\mathcal{D}$, then Condition \ref{cond:local_argmax} is satisfied whenever $\textsc{TreeSearch}$ sufficiently exhausts the local variants around $x_{t-1}$ in the discrete sequence space.
In other words, properties of the embedding map $\phi$ critically connect locally searching in the discrete sequences to searching in the embedding space towards improving function value.

\begin{figure}[ht]
    \vspace{-2.5mm}
  \centering
    \includegraphics[height = 0.4\linewidth]{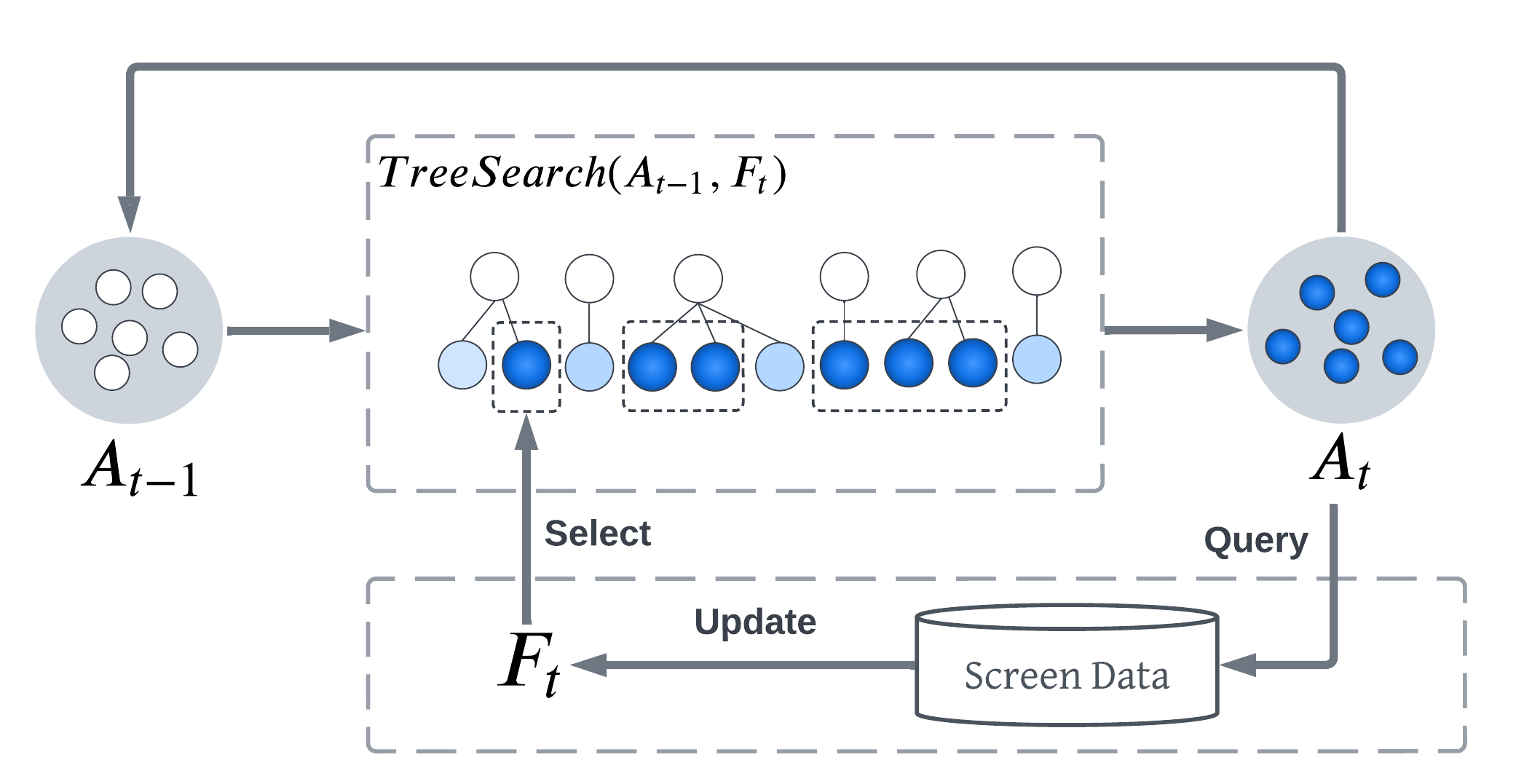} 
    \caption{Diagram of the meta algorithm (Alg.\ \ref{alg:meta})}\label{fig:metaAlg}
    \vspace{-2.5mm}
\end{figure}

\subsection{Regret Theory under GP Fitness}
\label{subsec:gp_theory}
In this section, we provide a Bayesian regret theory for the proposed meta algorithm, under some necessary assumptions and simplifications. We assume a GP prior-posterior modelling of $f^{\star}$ and $f_t$'s, as well as some local properties of $f_t$'s. The basic theoretical understanding of the tree search-based bandit learning process we obtained in this section may have implications for the broader settings beyond our assumptions/simplifications. 

\subsubsection{GP Modeling of $f^{\star}$}

For theoretical analysis, we consider a Bayesian learning setting where the ground truth $f^{\star}$ is assumed to follow a Gaussian Process (GP) prior, and the evaluation of $F(x)$ is assumed to be corrupted by Gaussian noise. GP model is well studied and widely applied in machine learning \cite{seeger2004gaussian}, especially in classic bandit optimization \cite{srinivas2009gaussian, chowdhury2017kernelized}.

\begin{assumption}
\label{asmp:gp_f}
$f^{\star}:\mathbb{R}^d \to R$ is a sample from the Gaussian process $\mathcal{GP}(0, k(\cdot, \cdot))$ (with known kernel function $k(\cdot, \cdot)$) as priori, i.e.
\begin{equation}
    f^{\star}(\phi(x)) \sim \mathcal{GP} \rbra{0, k(\phi(x), \phi(x^{\prime}) )}.
\end{equation}
\end{assumption}

\begin{assumption}
\label{asmp:noisy_fdb}
For any sequence $x$, querying its fitness $F(x)$ returns a noisy feedback $\tilde F(x)$ corrupted by a Gaussian, i.e.
\begin{equation}
     \tilde F(x) = f^{\star}(\phi(x)) + \epsilon, 
\end{equation}
where $\epsilon \sim \mathcal{N} \rbra{0, \lambda}, \lambda > 1$ and is sampled independently from the history.
\end{assumption}

To get a sharper regret bound, we sample $f_t$ from the posterior Gaussian Process with monitoring. Please refer to Appendix \ref{subsec:rare_swt} for more details on the Gaussian Process and the monitored updating rule of $f_t$. For the subsequent theory, we also need some assumptions on the local properties of $f_t$'s. Recall $\mathcal{D}:= \{\phi(x): \| \phi(x) - \phi(x^{wt})\| \leq R:= L_{\phi} N\}$ is the local region surrounding $\phi(x^{wt})$ in $\mathbb{R}^{d}$.
\begin{assumption}
    \label{asmp:ft_cvx}
    At each time step $t$, $f_t$ is bounded in $\mathcal{D}$. For $\forall t, z \in \mathcal{D},$
    \begin{equation}
        |f_t(z)| \leq B.
    \end{equation}
    Also, assume for all $t$, $f_t$ is locally concave, $L_f$ Lipschitz in $\mathcal{D}$ and attains its maximal value in the interior of $\mathcal{D}$.
\end{assumption}

Next in Theorem \ref{thm:main}, Alg. \ref{alg:meta} is proven to explore the local sequence space via tree search and bandit learning, while attaining low regret. For clarity and rigorousness, we provide a rewritten version of Alg. \ref{alg:meta} to reflect all simplification and technical adjustments we made, see Alg. \ref{alg:evo_ker_mnt_TS} in Appendix \ref{appx_proof}.

\begin{theorem}
\label{thm:main}
Under Assumption \ref{asmp:gp_f}, \ref{asmp:noisy_fdb}, 
\ref{asmp:ft_cvx} and Condition \ref{cond:local_argmax}, Alg.\ref{alg:meta} updates $f_t$ for $O \rbra{\gamma_T}$ times its Bayesian regret is bounded by
\begin{equation}
\label{rgt_bound}
        \operatorname{BayesRGT}(T) = O \rbra{  \beta_{T}\sqrt{\lambda T \gamma_{T}} + B \gamma_{T} \rbra{1 + \frac{4 L_{\phi}^2 N^2}{r^2}}},
   \end{equation}
   where $\beta_T = O\left( \mathbb{E} \sbra{\|f^{\star}\|_{k}}+\sqrt{d \ln T} + \sqrt{\gamma_{T-1}}\right)$. $\gamma_T$ is the information gain,
   $r$ is inherit from Condition \ref{cond:local_argmax} and $N$ is the depth of tree search.
\end{theorem}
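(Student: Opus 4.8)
## Proof Proposal

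The plan is to split the accumulated Bayesian regret into two pieces: a standard GP-bandit term controlled by the information gain $\gamma_T$, and a "local search gap" term that accounts for the fact that Condition~\ref{cond:local_argmax} only guarantees improvement against a local maximum of radius $r$, not the global maximum over $\bar{\mathcal{X}}$. Writing $x^{\star} = \arg\max_{x\in\bar{\mathcal{X}}} F(x)$, for each round $t$ I would insert $\pm f_t$ and $\pm F_t(x_t)$ terms so that
\begin{equation}
F(x^{\star}) - F(x_t) = \underbrace{\big(F(x^{\star}) - F_t(x^{\star})\big) + \big(F_t(x_t) - F(x_t)\big)}_{\text{posterior error of } f_t} + \underbrace{\big(F_t(x^{\star}) - F_t(x_t)\big)}_{\text{local search suboptimality}} .
\end{equation}
The first bracket is handled by the usual posterior-concentration argument for Thompson Sampling under a GP prior: with the monitored / rare-switching update (Appendix~\ref{subsec:rare_swt}), $f_t$ is resampled only $O(\gamma_T)$ times, and between resamplings one uses the GP confidence interval $|F_t(z) - F(z)| \le \beta_T \sigma_{t}(z)$ together with the posterior-matching property $\mathbb{E}[F(x^{\star})] = \mathbb{E}[F_t(\hat x_t)]$ where $\hat x_t$ maximizes $F_t$ over $\mathcal{D}$. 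Summing $\beta_T \sigma_t(x_t)$ over $t$ and invoking the standard information-gain bound $\sum_t \sigma_t^2 \lesssim \gamma_T$ (Cauchy–Schwarz in $t$) yields the $\beta_T\sqrt{\lambda T \gamma_T}$ term. The extra $B\gamma_T$ contribution comes from the rounds at which the monitored rule refuses to update: on those $O(\gamma_T)$ rounds we simply bound the instantaneous regret by $O(B)$ using Assumption~\ref{asmp:ft_cvx}.

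The heart of the argument — and where the geometric hypotheses in Assumption~\ref{asmp:ft_cvx} are used — is bounding the local search suboptimality $F_t(x^{\star}) - F_t(x_t)$. Here I would argue that Condition~\ref{cond:local_argmax} turns the tree search into an approximate projected-gradient / hill-climbing dynamic in the embedding space: since $f_t$ is locally concave and $L_f$-Lipschitz on $\mathcal{D}$ and attains its max in the interior, a point $x_t$ that beats every competitor within an $r$-ball of $\phi(x_{t-1})$ must have moved the function value toward the maximum by an amount proportional to the gradient norm, unless $\phi(x_{t-1})$ is already within distance $O(r)$ of the maximizer. Chaining this across iterations, the number of iterations before $x_t$ enters an $O(r)$-neighborhood of $\arg\max_{\mathcal{D}} f_t$ is controlled by $\mathrm{diam}(\mathcal{D})/r = 2R/r = 2L_\phi N / r$; squaring (because concavity converts a distance-$\delta$ gap into an $O(\delta^2)$ value gap, or because the per-step progress scales with distance) produces the $L_\phi^2 N^2 / r^2$ factor. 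Since $f_t$ is only re-drawn $O(\gamma_T)$ times and each fresh draw restarts this local-search phase, the total suboptimality contributed this way is $O\!\big(B\gamma_T \cdot (L_\phi N / r)^2\big)$, which combines with the refusal-round term into $B\gamma_T(1 + 4L_\phi^2 N^2/r^2)$.

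Finally I would assemble the two parts, take expectations over the GP prior and the algorithm's internal randomness, and absorb the $\mathbb{E}[\|f^{\star}\|_k]$, $\sqrt{d\ln T}$, $\sqrt{\gamma_{T-1}}$ contributions into the stated $\beta_T$. The main obstacle I anticipate is making the local-search analysis rigorous: Condition~\ref{cond:local_argmax} is a one-step guarantee in the \emph{discrete} sequence space phrased via the embedding, so I must carefully transfer it to a statement about the continuous region $\mathcal{D}$ (using the Lipschitz embedding and the assumption that $\phi(\bar{\mathcal{X}})$ spans $\mathcal{D}$ well), and then show that concavity plus the interior-maximum assumption prevent the hill-climbing iterates from stalling at a spurious point — i.e., that "local max within radius $r$" implies "globally near-optimal over $\mathcal{D}$ up to the $r$-dependent slack." Quantifying that slack as exactly $O(B(L_\phi N/r)^2)$ per restart, rather than something larger, is the delicate step; everything else is a reorganization of known GP-TS regret machinery.
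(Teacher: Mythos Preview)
Your high-level decomposition into a GP-bandit term plus a local-search term is the right architecture and matches the paper. However, two points deserve correction.

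First, the three-term decomposition as you wrote it does not quite work. The piece $F(x^{\star})-F_t(x^{\star})$ involves the posterior deviation at $x^{\star}$, which is neither a queried point (so $\sigma_t(x^{\star})$ is uncontrolled) nor killed by posterior matching: the identity you cite is $\mathbb{E}[\max_{\mathcal{D}} f^{\star}] = \mathbb{E}[\max_{\mathcal{D}} f_t]$, not $\mathbb{E}[f^{\star}(x^{\star})]=\mathbb{E}[f_t(x^{\star})]$. The paper therefore uses a cleaner two-term split,
\[
\mathbb{E}\Big[\sum_t \max_{\mathcal{D}} f^{\star} - f^{\star}(x_t)\Big]
= \mathbb{E}\Big[\sum_t \max_{\mathcal{D}} f_t - f_t(x_t)\Big]
+ \mathbb{E}\Big[\sum_t f_t(x_t) - f^{\star}(x_t)\Big],
\]
so that only the regression error at the \emph{played} point $x_t$ has to be controlled by $\beta_T\sigma_t(x_t)$ and the information-gain sum. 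Your ingredients are all present; only the grouping needs to change.

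Second, and more substantively, the local-search bound is where the paper introduces an idea you are missing: a \emph{proximal-point} comparison. Rather than arguing by hill-climbing distances, the paper sets $\alpha = r^{2}/(4B)$ and observes that the proximal iterate
\[
\tilde x_t \;=\; \arg\max_{x\in\mathcal{D}}\Big\{\, f_t(x) - \tfrac{1}{2\alpha}\|x-x_{t-1}\|^2 \,\Big\}
\]
must lie inside the $r$-ball around $x_{t-1}$ (because outside that ball the penalty already exceeds $2B$), hence Condition~\ref{cond:local_argmax} forces $f_t(x_t)\ge f_t(\tilde x_t)$. Then the classical proximal telescoping $f(z^{\star})-f(z_{i+1})\le \tfrac{1}{2\alpha}\big(\|z^{\star}-z_i\|^2-\|z^{\star}-z_{i+1}\|^2\big)$ gives, for each phase in which $f_t$ is held fixed,
\[
\sum_{i} \big[\max_{\mathcal{D}} f_{t_k} - f_{t_k}(\tilde x_{k,i})\big] \;\le\; 2B + \frac{2R^{2}}{\alpha} \;=\; 2B\Big(1+\frac{4R^{2}}{r^{2}}\Big),
\]
with $R=L_\phi N$. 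This is exactly where the square $L_\phi^{2}N^{2}/r^{2}$ comes from: it is the $R^{2}/\alpha$ of the prox telescope with $\alpha\propto r^{2}$, not a gradient-step-count argument. Your ``squaring because concavity converts distance into value-squared'' heuristic is not the mechanism and would not yield the stated constant; your own suspicion that this is the delicate step is well-founded.

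Finally, the source of the $B\gamma_T$ factor is not ``rounds where the monitored rule refuses to update'' (those are all but $O(\gamma_T)$ rounds, not $O(\gamma_T)$ of them). It is that rare switching creates $K=O(\gamma_T)$ phases, and each phase restarts the proximal analysis from scratch, contributing one $O\big(B(1+4R^{2}/r^{2})\big)$ term. Summing over $K$ phases gives the second term of the theorem; the first term is your GP-bandit sum exactly as you described.
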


\paragraph{Rate of regret}  
The highest order term in (\ref{rgt_bound}) is of $\Tilde{O} \rbra{\gamma_T \sqrt{T} \vee  \sqrt{d \gamma_T T}}$, which matches the results by \cite{srinivas2009gaussian, chowdhury2017kernelized} studying GP bandits.  (\ref{rgt_bound}) turns out to be $\tilde O\rbra{d \sqrt{T}}$ (with $\gamma_T = d \log T$) when $k(\cdot,\cdot)$ in the prior is linear, which downgrades the GP model to the $d$-dim Bayesian linear model. It's worth mentioning that the recovered $\tilde O\rbra{d \sqrt{T}}$ bound improves \cite{yuan2022bandit} by a factor of $\sqrt{d}$, benefiting from the stability brought by $f_t$'s rare switching.
Here $\gamma_T$ is closely related to the ``effective dimension" of the chosen kernel, which is a natural and common complexity metric for online exploration.

\paragraph{Novelty and significance compared to classical results of bandits and Bayesian optimization.}
We highlight that classical results {\it do not} apply our tree search algorithm that uses local search to gradually explore the action space. In particular, classical methods (such as plain vanilla UCB or TS) require finding the maximum of a surrogate function within the search region. This is far from the protein design practice where solutions with large surrogate values are approached by search: looking for the argmax of $\hat f_t$ can lead to instability and invalid solution. 

In contrast, our method adds mutations gradually in a tree search process mimicking real-world screen practice. Analysis of such methods is much more complicated: We adapt classic arguments in optimization theory to analyze the progression of local search, which is further entangled with bandit exploration and information gain when new samples are collected. The proof idea is to view each local update as a form of proximal point optimization update and derive a novel recursive decomposition of the overall regret. Please see Appendix \ref{appx_proof} for full proof details. Our analysis may be of interest to analyzing a broader class of bandit-guided evolutionary optimization algorithms. 

Despite these differences, our regret bound nearly matches regret bound of classical bandit methods. The message is quite positive: The use of iterative local search does not impair the quality of learning, with provable guarantee to explore the space of interests. 

\paragraph{Universality of GP assumption}

GP provides a universal function approximation and it comes with both practical and theoretical implications.  GP model and Bayesian optimization have been directly applied in protein engineering practice and proved effective in wet-lab experiments \cite{romero2013navigating},  which supports our assumptions here. Further, GP can represent any kernel function space, which together with the modern deep learning theory \cite{jacot2018neural} imply a powerful theoretical approximation to neural networks used in practice.

\paragraph{Synergy between Tree search and Bandit} 

Our analysis demonstrates how to create synergy between tree search and bandit machine learning. If we just use pure tree search without any bandit learning, tree search has a complexity that grows exponentially with depth. With the use of bandit learning, the sample complexity reduces to quadratic with tree depth.

\section{Full Algorithm}
In this section, we present the full algorithm.

\begin{figure}
\vspace{-3mm}
  \centering
    \includegraphics[height = 0.4\linewidth]{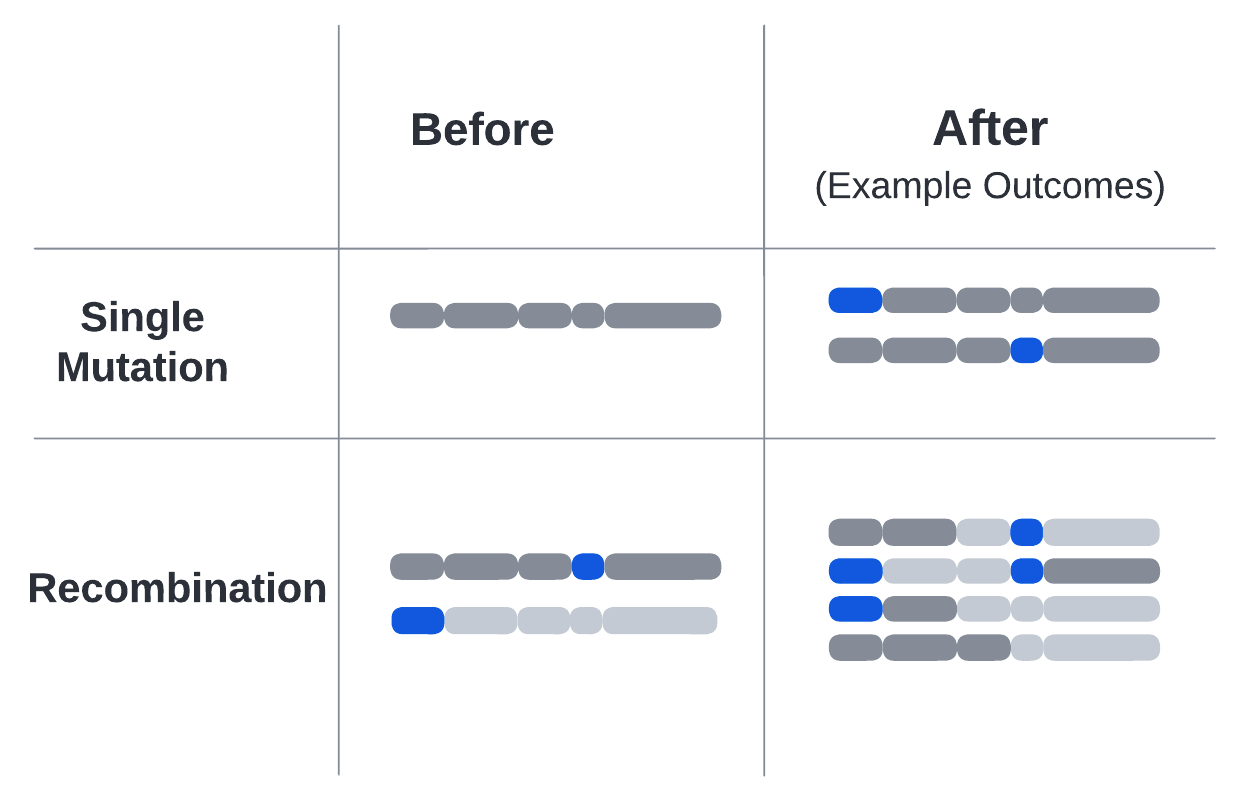} 
    \vspace{-2mm}
    \caption{A demonstration of mutation and recombination. \label{fig:sigmut_rcb}}
    \vspace{-7mm}
\end{figure}

\subsection{Tree Search Heuristics}

Algorithm \ref{alg:cap} expands a tree starting from the root wide-type sequence, i.e., $\mathcal{A} = \{x^{wt} \}$. Similar to the practice of directed evolution, we generate child nodes from parents in two ways: adding random mutation to one sequence and pairwise recombining two sequences. See Figure \ref{fig:sigmut_rcb} for an illustration of these two operations. We use hyperparameters $n_1,n_2$ to control the rate of mutation and rate of recombination.

We use an active set $\mathcal{A}$ to keep track of the frontier of the tree search. At each round, the tree expands in a randomized way. New nodes with high scores measured by $F$ will be kept track of and later used for updating the active set $\mathcal{A}$.  

Ideally, one would want to keep the full search history in $\mathcal{A}$, but then the runtime will quickly blow up due to the exponential tree size.
To make it more computationally affordable, we do not expand the active set $\mathcal{A}_t$, but keep it at a constant size in our implementation. We use a parameter $\rho$ to control the portion of previously visited nodes to keep in the active set. This parameter can also be viewed as balancing depth and width in tree search.

\vspace*{-2mm}

\begin{algorithm}[H]
\caption{\textsc{TreeSearch}($\gA, F(\cdot)$)}
\label{alg:cap}
\begin{algorithmic}[1]
\STATE \textbf{Input: }  Active node set $\gA$, scoring function $F(\cdot)$
\STATE \textbf{Parameter: } $n_1, n_2, n_3, \rho$
\STATE \textbf{Initialization: } Candidate node set $\gC = \emptyset$, Query node set $\gQ = \emptyset$
        \FOR {$i=1,2,\cdots,n_1$}
            \STATE Add random mutation to a random sequence $x \in \gA$.\label{line:single_mutation}
            \STATE Add the new sequence to candidate node set $\gC$.
        \ENDFOR
        \FOR {$i=1,2, \cdots, n_2$}
            \STATE Sample $x, y$ from $\gA$ uniformly with replacement.
            \STATE Recombine $x,y$ and add the new sequence to $\gC$.
        \ENDFOR
        
        \STATE Set new active node set $\gA$ with $n_3$ sequence where $\rho$ portion is from $\gA$ and $1-\rho$ portion is from the top  scored sequences in $\{F(x): x\in \gC\}$. 
        
        \STATE \textbf{return } $\gA$

\end{algorithmic}
\end{algorithm}

\vspace*{-2mm}

\label{sec:full_alg}

\subsection{Bandit Exploration}
We use two classic exploration methods for scoring, Upper Confidence Bound (UCB) \citep{li2010contextual} and Thompson Sampling (TS; \citet{russo2014learning}), to enable our algorithm to consider the potential of each node and look ahead. Details are referred to Appendix \ref{subsec:ban}.

\begin{figure*}[t]
    \centering
    \includegraphics[height = 0.23\textwidth, width=0.85\textwidth]{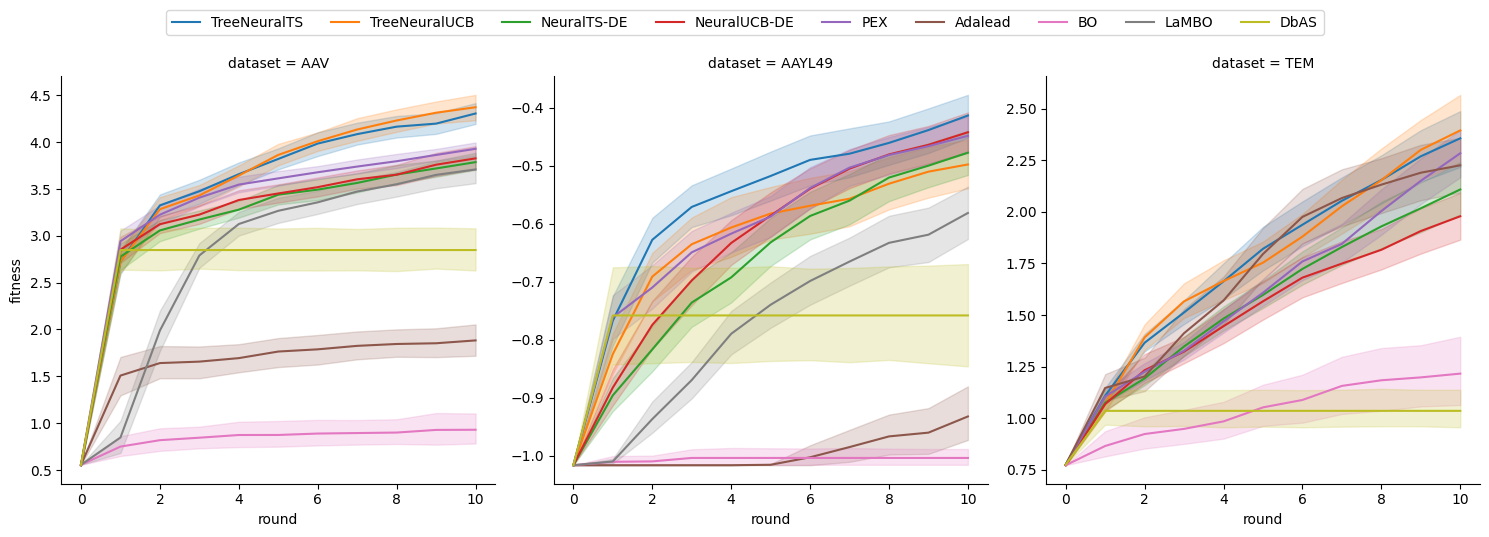}
    \includegraphics[height = 0.19\textwidth, width=0.85\textwidth]{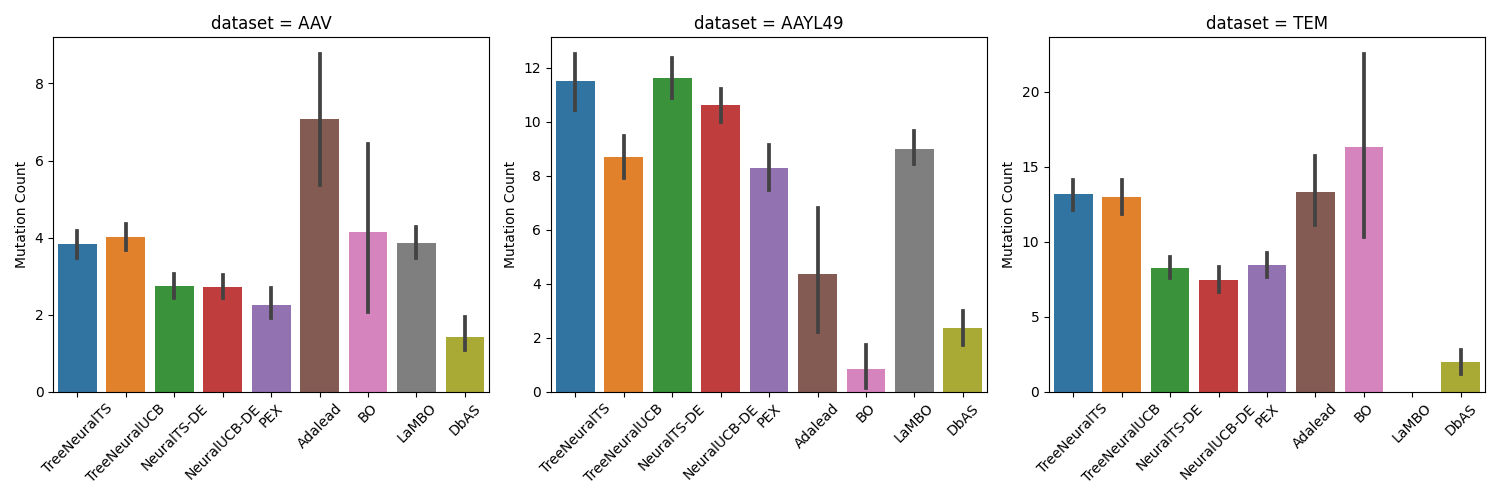}
    \caption{Learning curves of algorithms with comparison to baselines, tested over three datasets.}\label{fig:comparison_baseline}
    \vspace{-6mm}
\end{figure*}

\section{Experiment}\label{sec:exp}

In this section, we evaluate our algorithm using oracles simulated to mimic the exploration process in wet-lab protein screens. Following \citet{ren2022proximal} and \citet{sinai2020adalead}, we use oracles to mimic the fitness landscape as the wet-lab experiment is both time-consuming and cost-intensive. We build experiments around real-world protein datasets, such as AAV \cite{Bryant2021}, TEM \cite{Gonzalez2019-fr} and AAYL49 antibody \cite{Engelhart2022}. Experiments results show that the tree search-based bandit outperforms existing baselines and leads to several interesting observations.

\subsection{Datasets and Setup}

\subsubsection{Datasets}

We experiment using three datasets from protein engineering studies and train oracles to simulate the ground-truth wet-lab fitness scores $f^{\star}$ of the landscape. The AAV and TEM oracles use pre-trained TAPE embedding \cite{tape2019} with a CNN model and the AAYL49 oracle is a downstream task from the pre-trained TAPE transformer model \cite{tape2019}, more details about oracle could be found in \ref{sec:data_oracle}. For each round of the experiment, the model will query sequences from the black-box oracle, and the oracle will produce a fitness score depending on the sequence similar to the wet lab.

In particular, we use datasets of Adeno-associated virus 2 capsid protein (AAV) \cite{Bryant2021} which aimed for viral viability with search space $20^{28}$; TEM-1 $\beta$-Lactamase (TEM) \cite{Gonzalez2019-fr} which aimed for thermodynamic stability with search space $20^{286}$; and, Anti-SARS-CoV-2 antibody (AAYL49) \cite{Engelhart2022} which aimed for binding affinity with search space of $20^{118}$. A full description can be found in \ref{sec:dataset_full}

\subsubsection{Experiment Setup}

In the experiment, we run each algorithm for 10 rounds with 100 query sequences per round for a fair comparison with our baselines. The algorithm cannot get any information about the oracle except the channel of queried sequences. Each test is run for 50 repeats using 50 different random seeds and measured the average performance across all seeds.

We use PEX \citep{ren2022proximal}, Adalead \citep{sinai2020adalead}, Bayesian Optimization (BO) implemented by \citet{sinai2020adalead}, DbAS \citep{brookes2018design} implemented by \citet{sinai2020adalead} and LaMBO\citep{pmlr-v162-stanton22a} as our main baselines. We also use NeuralUCB-DE and NeuralTS-DE as the other two baselines which can be viewed as two downgraded versions of our algorithm without tree search. In detail, NeuralUCB-DE and NeuralTS-DE use uniform mutation with a mutation probability and recombination to generate new sequences. 
We also tested CbAS \citep{brookes2019conditioning} but choose not report it, because it performs almost the same as DbAS, consistent with results \citet{ren2022proximal}. The performance of the LaMBO on the TEM dataset is missing because the LaMBO algorithm has a memory limit and run-time efficiency issue on the TEM dataset due to the long sequence length. We do not use DyNAPPO \citep{Angermueller2020Model-based} as our baseline due to tensorflow incompatibility.

For tree search-based algorithms, we use a neural network and follow \cite{pan2022neuralbandits}'s approach and use the second last dense layer's output of the neural network as $\phi(x)$ used by the UCB/TS formula. We reported tree search-based algorithms both uses UCB(TreeNeuralUCB) and TS(TreeNeuralTS). More details are included in Appendix \ref{sec:data_oracle}.

We run all the exploration algorithms on the same neural network structure for fair comparison. One hot encoding with CNN was used for the TEM and AAYL49 antibody datasets. We use TAPE \citep{tape2019} embedding with a simple 2-layer fully connected neural network for the AAV dataset to compensate for the limited sequence length.

\subsection{Results and Analysis} 

\textbf{Performances and Comparison.}
The experiment results are shown in Figure \ref{fig:comparison_baseline}.  Our tree search-based algorithm generally outperforms our baselines regarding the max fitness performance. The detailed ranking for all algorithms on each dataset can be seen in Appendix \ref{sec:ranking_of_algorithm}. On the datasets TEM and AAYL49, all algorithms' curves seem not to reach convergence in 10 rounds. This is because TEM and AAYL49 deal with much longer sequences, so convergence is slower no matter what algorithm is used. We also conducted additional tests on the landscapes build by \citet{Thomas2022.10.28.514293} reported in \ref{sec:data_oracle}, the results are consistent with our oracle and the tree search method outperforms other baselines.

For the performance of mutation count, tree search-based algorithms are not worse than PEX-based algorithms. On the AAV dataset, all algorithms can be controlled within 5 mutation counts. However, on datasets with larger sequence lengths, both NeuralUCB-DE and NeuralTS-DE have very large mutation counts.

We also run an experiment on the 3gfb oracles build by \citet{Thomas2022.10.28.514293} using the same setting. We report the results in appendix \ref{sec:add_exp}. The results is consistent with other oracles and the tree search-based algorithms outperform the baselines regarding the max fitness performance.

\noindent\textbf{Effect of Tree Search compared to DE}
In Figure \ref{fig:comparison_baseline}, tree search-based algorithms generally outperform two baselines NeuralTS-DE and NeuralUCB-DE. Meanwhile, tree search-based algorithms only need fewer mutation counts than those non-treesearch methods.

  \begin{figure}[h]
  \vspace{-2mm}
  \centering
   \includegraphics[height = 0.44\linewidth]{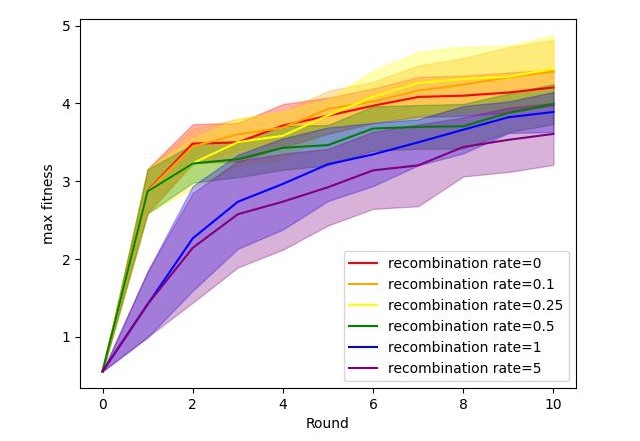} 
    \caption{Effects of recombination of TreeNeuralBandit tested with the AAV oracle. By varying the rate of recombination, we see that performance tops with a proper choice of positive recombination rate (yellow curve). }\label{fig:recombination}
    \vspace{-4mm}
  \end{figure}

\noindent\textbf{Effect of Recombination.}
We also compare different recombination rates. The line with a zero recombination rate means that the algorithm only includes mutation and doesn't consider the recombination process. In Figure \ref{fig:recombination}, we show that adding the recombination process to generating candidate children nodes is beneficial for improving performance, especially for small recombination rates. However, if the recombination rate becomes further larger, performance will gradually decrease and become even worse than the algorithm with a zero recombination rate.

  \begin{figure}
    \vspace{-2mm}
      \centering
      \includegraphics[height = 0.45\linewidth]{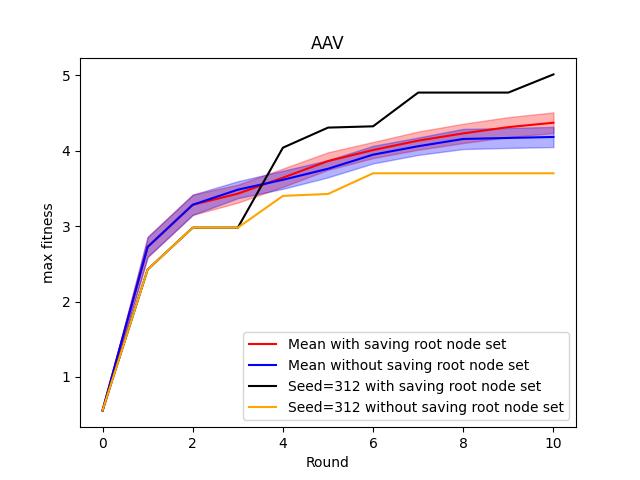} 
    \caption{Effect of keeping parent nodes in active set. }\label{fig:save_root_node_set}
      \label{fig:enter-label}
      \vspace{-4mm}
  \end{figure}

\noindent\textbf{Effect of keeping parent/root nodes in active set.}
We use a parameter $\rho$ to control updates of the active set, i.e., saving part of the root node set in each round. Figure \ref{fig:save_root_node_set} shows one seed's curve where the exploration got stuck when the active set is not updated to save enough parents/root. By saving previous nodes, it balances width-first and depth-first better and allows the algorithm to find higher-value nodes.

\noindent\textbf{Effect of diversity promotion.}
We tested adding an additional diversity bonus and observe a trade-off between group diversity and group fitness. Details are included in Appendix \ref{sec:diversity}.

\section{Limitations}

The functionalities of the protein sequences are complicated. So far, none of our oracles or oracles from Adalead and PEX-Mufacnet, or the pre-trained protein language models such as ESM or TAPE could accurately predict the whole landscape. 
We do not allow using our models in production without authorization due to potential negative social impacts.
More discussion on limitations is available in Appendix.
\section{Conclusion}\label{sec:conclusion}

This paper proposes a tree-based bandit learning method for enhancing the process of protein engineering. The methods expand the tree from the initial sequence (wild-type) with the guidance of bandit machine learning models using randomized tree search heuristics, machine learning models, pre-trained embeddings. We have shown the algorithm can discover a near-optimal design under simplified assumptions, with experimental validation.

\section*{Acknowledgments}
We extend our gratitude to MLAB Biosciences Inc. for their assistance in this work by providing abundant computing resources. We also extend our gratitude to Peiru Xu, who contributed to the implementation of several baseline algorithms during her internship at MLAB Biosciences.

\bibliography{main}

\begin{thebibliography}{56}
\providecommand{\natexlab}[1]{#1}

\bibitem[{Abbasi-Yadkori, P{\'a}l, and
  Szepesv{\'a}ri(2011)}]{abbasi2011improved}
Abbasi-Yadkori, Y.; P{\'a}l, D.; and Szepesv{\'a}ri, C. 2011.
\newblock Improved algorithms for linear stochastic bandits.
\newblock \emph{Advances in neural information processing systems}, 24:
  2312--2320.

\bibitem[{Angermueller et~al.(2020)Angermueller, Dohan, Belanger, Deshpande,
  Murphy, and Colwell}]{Angermueller2020Model-based}
Angermueller, C.; Dohan, D.; Belanger, D.; Deshpande, R.; Murphy, K.; and
  Colwell, L. 2020.
\newblock Model-based reinforcement learning for biological sequence design.
\newblock In \emph{International Conference on Learning Representations}.

\bibitem[{Arnold(1998)}]{arnold1998design}
Arnold, F.~H. 1998.
\newblock Design by directed evolution.
\newblock \emph{Accounts of chemical research}, 31(3): 125--131.

\bibitem[{Auer(2002)}]{Auer02}
Auer, P. 2002.
\newblock Using Confidence Bounds for Exploitation-Exploration Trade-offs.
\newblock \emph{Journal of Machine Learning Research}, 3: 397--422.

\bibitem[{Bedbrook et~al.(2017)Bedbrook, Yang, Rice, Gradinaru, and
  Arnold}]{bedbrook2017machine}
Bedbrook, C.~N.; Yang, K.~K.; Rice, A.~J.; Gradinaru, V.; and Arnold, F.~H.
  2017.
\newblock Machine learning to design integral membrane channelrhodopsins for
  efficient eukaryotic expression and plasma membrane localization.
\newblock \emph{PLoS computational biology}, 13(10): e1005786.

\bibitem[{Brookes, Park, and Listgarten(2019)}]{brookes2019conditioning}
Brookes, D.; Park, H.; and Listgarten, J. 2019.
\newblock Conditioning by adaptive sampling for robust design.
\newblock In \emph{International conference on machine learning}, 773--782.
  PMLR.

\bibitem[{Brookes and Listgarten(2018)}]{brookes2018design}
Brookes, D.~H.; and Listgarten, J. 2018.
\newblock Design by adaptive sampling.
\newblock \emph{arXiv preprint arXiv:1810.03714}.

\bibitem[{Bryant et~al.(2021)Bryant, Bashir, Sinai, Jain, Ogden, Riley, Church,
  Colwell, and Kelsic}]{Bryant2021}
Bryant, D.~H.; Bashir, A.; Sinai, S.; Jain, N.~K.; Ogden, P.~J.; Riley, P.~F.;
  Church, G.~M.; Colwell, L.~J.; and Kelsic, E.~D. 2021.
\newblock Deep diversification of an AAV capsid protein by machine learning.
\newblock \emph{Nature Biotechnology}, 39(6): 691--696.

\bibitem[{Chen and Arnold(1991)}]{chen1991enzyme}
Chen, K.; and Arnold, F.~H. 1991.
\newblock Enzyme engineering for nonaqueous solvents: random mutagenesis to
  enhance activity of subtilisin E in polar organic media.
\newblock \emph{Bio/Technology}, 9(11): 1073--1077.

\bibitem[{Chen and Arnold(1993)}]{chen1993tuning}
Chen, K.; and Arnold, F.~H. 1993.
\newblock Tuning the activity of an enzyme for unusual environments: sequential
  random mutagenesis of subtilisin E for catalysis in dimethylformamide.
\newblock \emph{Proceedings of the National Academy of Sciences}, 90(12):
  5618--5622.

\bibitem[{Chowdhury and Gopalan(2017)}]{chowdhury2017kernelized}
Chowdhury, S.~R.; and Gopalan, A. 2017.
\newblock On kernelized multi-armed bandits.
\newblock In \emph{International Conference on Machine Learning}, 844--853.
  PMLR.

\bibitem[{Doppa(2021)}]{doppa2021adaptive}
Doppa, J.~R. 2021.
\newblock Adaptive experimental design for optimizing combinatorial structures.
\newblock In \emph{Proceedings of the Thirtieth International Joint Conference
  on Artificial Intelligence (IJCAI)}, 4940--4945.

\bibitem[{Elnaggar et~al.(2020)Elnaggar, Heinzinger, Dallago, Rihawi, Wang,
  Jones, Gibbs, Feher, Angerer, Steinegger, Bhowmik, and
  Rost}]{Elnaggar2020ProtTrans}
Elnaggar, A.; Heinzinger, M.; Dallago, C.; Rihawi, G.; Wang, Y.; Jones, L.;
  Gibbs, T.; Feher, T.; Angerer, C.; Steinegger, M.; Bhowmik, D.; and Rost, B.
  2020.
\newblock ProtTrans: Towards Cracking the Language of Life's Code Through
  Self-Supervised Deep Learning and High Performance Computing.

\bibitem[{Engelhart et~al.(2022)Engelhart, Emerson, Shing, Lennartz, Guion,
  Kelley, Lin, Lopez, Younger, and Walsh}]{Engelhart2022}
Engelhart, E.; Emerson, R.; Shing, L.; Lennartz, C.; Guion, D.; Kelley, M.;
  Lin, C.; Lopez, R.; Younger, D.; and Walsh, M.~E. 2022.
\newblock A dataset comprised of binding interactions for 104,972 antibodies
  against a SARS-CoV-2 peptide.
\newblock \emph{Scientific Data}, 9(1): 653.

\bibitem[{Fannjiang and Listgarten(2020)}]{fannjiang2020autofocused}
Fannjiang, C.; and Listgarten, J. 2020.
\newblock Autofocused oracles for model-based design.
\newblock \emph{Advances in Neural Information Processing Systems}, 33:
  12945--12956.

\bibitem[{Fox et~al.(2003)Fox, Roy, Govindarajan, Minshull, Gustafsson, Jones,
  and Emig}]{fox2003optimizing}
Fox, R.; Roy, A.; Govindarajan, S.; Minshull, J.; Gustafsson, C.; Jones, J.~T.;
  and Emig, R. 2003.
\newblock Optimizing the search algorithm for protein engineering by directed
  evolution.
\newblock \emph{Protein engineering}, 16(8): 589--597.

\bibitem[{Fox et~al.(2007)Fox, Davis, Mundorff, Newman, Gavrilovic, Ma, Chung,
  Ching, Tam, Muley et~al.}]{fox2007improving}
Fox, R.~J.; Davis, S.~C.; Mundorff, E.~C.; Newman, L.~M.; Gavrilovic, V.; Ma,
  S.~K.; Chung, L.~M.; Ching, C.; Tam, S.; Muley, S.; et~al. 2007.
\newblock Improving catalytic function by ProSAR-driven enzyme evolution.
\newblock \emph{Nature biotechnology}, 25(3): 338--344.

\bibitem[{Freschlin, Fahlberg, and Romero(2022)}]{freschlin2022machine}
Freschlin, C.~R.; Fahlberg, S.~A.; and Romero, P.~A. 2022.
\newblock Machine learning to navigate fitness landscapes for protein
  engineering.
\newblock \emph{Current Opinion in Biotechnology}, 75: 102713.

\bibitem[{Gonzalez and Ostermeier(2019)}]{Gonzalez2019-fr}
Gonzalez, C.~E.; and Ostermeier, M. 2019.
\newblock Pervasive pairwise intragenic epistasis among sequential mutations in
  {TEM-1} $\beta$-lactamase.
\newblock \emph{J. Mol. Biol.}, 431(10): 1981--1992.

\bibitem[{Hao, Lattimore, and Wang(2020)}]{hao2020high}
Hao, B.; Lattimore, T.; and Wang, M. 2020.
\newblock High-dimensional sparse linear bandits.
\newblock \emph{Advances in Neural Information Processing Systems}, 33:
  10753--10763.

\bibitem[{Hibbert and Dalby(2005)}]{hibbert2005directed}
Hibbert, E.~G.; and Dalby, P.~A. 2005.
\newblock Directed evolution strategies for improved enzymatic performance.
\newblock \emph{Microbial Cell Factories}, 4(1): 1--6.

\bibitem[{Hsu et~al.(2022)Hsu, Nisonoff, Fannjiang, and
  Listgarten}]{Hsu2022Learning}
Hsu, C.; Nisonoff, H.; Fannjiang, C.; and Listgarten, J. 2022.
\newblock Learning protein fitness models from evolutionary and assay-labeled
  data.
\newblock \emph{Nature Biotechnology}, 40(7): 1114--1122.

\bibitem[{Jacot, Gabriel, and Hongler(2018)}]{jacot2018neural}
Jacot, A.; Gabriel, F.; and Hongler, C. 2018.
\newblock Neural tangent kernel: Convergence and generalization in neural
  networks.
\newblock \emph{Advances in neural information processing systems}, 31.

\bibitem[{Kuchner and Arnold(1997)}]{kuchner1997directed}
Kuchner, O.; and Arnold, F.~H. 1997.
\newblock Directed evolution of enzyme catalysts.
\newblock \emph{Trends in biotechnology}, 15(12): 523--530.

\bibitem[{Lansdell, Triantafillou, and Kording(2019)}]{lansdell2019rarely}
Lansdell, B.; Triantafillou, S.; and Kording, K. 2019.
\newblock Rarely-switching linear bandits: optimization of causal effects for
  the real world.
\newblock \emph{arXiv preprint arXiv:1905.13121}.

\bibitem[{Li et~al.(2022)Li, Wang, Wang, and Wang}]{li2022communication}
Li, C.; Wang, H.; Wang, M.; and Wang, H. 2022.
\newblock Communication efficient distributed learning for kernelized
  contextual bandits.
\newblock \emph{Advances in Neural Information Processing Systems}, 35:
  19773--19785.

\bibitem[{Li et~al.(2010)Li, Chu, Langford, and Schapire}]{li2010contextual}
Li, L.; Chu, W.; Langford, J.; and Schapire, R.~E. 2010.
\newblock A contextual-bandit approach to personalized news article
  recommendation.
\newblock In \emph{Proceedings of the 19th international conference on World
  wide web}, 661--670. ACM.

\bibitem[{Mockus(1989)}]{Mockus1989BayesianAT}
Mockus, J. 1989.
\newblock Bayesian Approach to Global Optimization: Theory and Applications.

\bibitem[{Nijkamp et~al.(2022)Nijkamp, Ruffolo, Weinstein, Naik, and
  Madani}]{nijkamp2022progen2}
Nijkamp, E.; Ruffolo, J.; Weinstein, E.~N.; Naik, N.; and Madani, A. 2022.
\newblock ProGen2: Exploring the Boundaries of Protein Language Models.

\bibitem[{Packer and Liu(2015)}]{packer2015methods}
Packer, M.~S.; and Liu, D.~R. 2015.
\newblock Methods for the directed evolution of proteins.
\newblock \emph{Nature Reviews Genetics}, 16(7): 379--394.

\bibitem[{Rao et~al.(2019)Rao, Bhattacharya, Thomas, Duan, Chen, Canny, Abbeel,
  and Song}]{tape2019}
Rao, R.; Bhattacharya, N.; Thomas, N.; Duan, Y.; Chen, X.; Canny, J.; Abbeel,
  P.; and Song, Y.~S. 2019.
\newblock Evaluating Protein Transfer Learning with TAPE.
\newblock In \emph{Advances in Neural Information Processing Systems}.

\bibitem[{Ren et~al.(2022)Ren, Li, Ding, Zhou, Ma, and Peng}]{ren2022proximal}
Ren, Z.; Li, J.; Ding, F.; Zhou, Y.; Ma, J.; and Peng, J. 2022.
\newblock Proximal exploration for model-guided protein sequence design.
\newblock In \emph{International Conference on Machine Learning}, 18520--18536.
  PMLR.

\bibitem[{Rives et~al.(2019)Rives, Meier, Sercu, Goyal, Lin, Liu, Guo, Ott,
  Zitnick, Ma, and Fergus}]{rives2019biological}
Rives, A.; Meier, J.; Sercu, T.; Goyal, S.; Lin, Z.; Liu, J.; Guo, D.; Ott, M.;
  Zitnick, C.~L.; Ma, J.; and Fergus, R. 2019.
\newblock Biological Structure and Function Emerge from Scaling Unsupervised
  Learning to 250 Million Protein Sequences.
\newblock \emph{PNAS}.

\bibitem[{Rives et~al.(2021)Rives, Meier, Sercu, Goyal, Lin, Liu, Guo, Ott,
  Zitnick, Ma, and Fergus}]{rives2021biological}
Rives, A.; Meier, J.; Sercu, T.; Goyal, S.; Lin, Z.; Liu, J.; Guo, D.; Ott, M.;
  Zitnick, C.~L.; Ma, J.; and Fergus, R. 2021.
\newblock Biological structure and function emerge from scaling unsupervised
  learning to 250 million protein sequences.
\newblock \emph{Proceedings of the National Academy of Sciences}, 118(15):
  e2016239118.

\bibitem[{Romero, Krause, and Arnold(2013)}]{romero2013navigating}
Romero, P.~A.; Krause, A.; and Arnold, F.~H. 2013.
\newblock Navigating the protein fitness landscape with Gaussian processes.
\newblock \emph{Proceedings of the National Academy of Sciences}, 110(3):
  E193--E201.

\bibitem[{Russo and Van~Roy(2014)}]{russo2014learning}
Russo, D.; and Van~Roy, B. 2014.
\newblock Learning to optimize via posterior sampling.
\newblock \emph{Mathematics of Operations Research}, 39(4): 1221--1243.

\bibitem[{Seeger(2004)}]{seeger2004gaussian}
Seeger, M. 2004.
\newblock Gaussian processes for machine learning.
\newblock \emph{International journal of neural systems}, 14(02): 69--106.

\bibitem[{Shin et~al.(2021)Shin, Riesselman, Kollasch, McMahon, Simon, Sander,
  Manglik, Kruse, and Marks}]{shin2021protein}
Shin, J.-E.; Riesselman, A.~J.; Kollasch, A.~W.; McMahon, C.; Simon, E.;
  Sander, C.; Manglik, A.; Kruse, A.~C.; and Marks, D.~S. 2021.
\newblock Protein design and variant prediction using autoregressive generative
  models.
\newblock \emph{Nature communications}, 12(1): 1--11.

\bibitem[{Shuai, Ruffolo, and Gray(2022)}]{Shuai2021.12.13.472419}
Shuai, R.~W.; Ruffolo, J.~A.; and Gray, J.~J. 2022.
\newblock Generative language modeling for antibody design.
\newblock \emph{bioRxiv}.

\bibitem[{Sinai et~al.(2020)Sinai, Wang, Whatley, Slocum, Locane, and
  Kelsic}]{sinai2020adalead}
Sinai, S.; Wang, R.; Whatley, A.; Slocum, S.; Locane, E.; and Kelsic, E. 2020.
\newblock AdaLead: A simple and robust adaptive greedy search algorithm for
  sequence design.
\newblock \emph{arXiv preprint}.

\bibitem[{Smith and Petrenko(1997)}]{smith1997phage}
Smith, G.~P.; and Petrenko, V.~A. 1997.
\newblock Phage display.
\newblock \emph{Chemical reviews}, 97(2): 391--410.

\bibitem[{Srinivas et~al.(2009)Srinivas, Krause, Kakade, and
  Seeger}]{srinivas2009gaussian}
Srinivas, N.; Krause, A.; Kakade, S.~M.; and Seeger, M. 2009.
\newblock Gaussian process optimization in the bandit setting: No regret and
  experimental design.
\newblock \emph{arXiv preprint arXiv:0912.3995}.

\bibitem[{Stanton et~al.(2022)Stanton, Maddox, Gruver, Maffettone, Delaney,
  Greenside, and Wilson}]{pmlr-v162-stanton22a}
Stanton, S.; Maddox, W.; Gruver, N.; Maffettone, P.; Delaney, E.; Greenside,
  P.; and Wilson, A.~G. 2022.
\newblock Accelerating {B}ayesian Optimization for Biological Sequence Design
  with Denoising Autoencoders.
\newblock In Chaudhuri, K.; Jegelka, S.; Song, L.; Szepesvari, C.; Niu, G.; and
  Sabato, S., eds., \emph{Proceedings of the 39th International Conference on
  Machine Learning}, volume 162 of \emph{Proceedings of Machine Learning
  Research}, 20459--20478. PMLR.

\bibitem[{Thomas et~al.(2022)Thomas, Agarwala, Belanger, Song, and
  Colwell}]{Thomas2022.10.28.514293}
Thomas, N.; Agarwala, A.; Belanger, D.; Song, Y.~S.; and Colwell, L.~J. 2022.
\newblock Tuned Fitness Landscapes for Benchmarking Model-Guided Protein
  Design.
\newblock \emph{bioRxiv}.

\bibitem[{Turner(2009)}]{turner2009directed}
Turner, N.~J. 2009.
\newblock Directed evolution drives the next generation of biocatalysts.
\newblock \emph{Nature chemical biology}, 5(8): 567--573.

\bibitem[{Wang et~al.(2022)Wang, Kim, Cong, and Wang}]{wang2022neural}
Wang, C.; Kim, J.; Cong, L.; and Wang, M. 2022.
\newblock Neural Bandits for Protein Sequence Optimization.
\newblock In \emph{2022 56th Annual Conference on Information Sciences and
  Systems (CISS)}, 188--193. IEEE.

\bibitem[{Williams and Rasmussen(2006)}]{williams2006gaussian}
Williams, C.~K.; and Rasmussen, C.~E. 2006.
\newblock \emph{Gaussian processes for machine learning}, volume~2.
\newblock MIT press Cambridge, MA.

\bibitem[{Winter et~al.(1994)Winter, Griffiths, Hawkins, and
  Hoogenboom}]{winter1994making}
Winter, G.; Griffiths, A.~D.; Hawkins, R.~E.; and Hoogenboom, H.~R. 1994.
\newblock Making antibodies by phage display technology.
\newblock \emph{Annual review of immunology}, 12(1): 433--455.

\bibitem[{Wu et~al.(2019)Wu, Kan, Lewis, Wittmann, and Arnold}]{wu2019machine}
Wu, Z.; Kan, S. B.~J.; Lewis, R.~D.; Wittmann, B.~J.; and Arnold, F.~H. 2019.
\newblock Machine learning-assisted directed protein evolution with
  combinatorial libraries.
\newblock \emph{Proceedings of the National Academy of Sciences}, 116(18):
  8852--8858.

\bibitem[{Xu et~al.(2020)Xu, Wen, Zhao, and Gu}]{pan2022neuralbandits}
Xu, P.; Wen, Z.; Zhao, H.; and Gu, Q. 2020.
\newblock Neural Contextual Bandits with Deep Representation and Shallow
  Exploration.
\newblock \emph{CoRR}, abs/2012.01780.

\bibitem[{Yang, Wu, and Arnold(2019)}]{yang2019machine}
Yang, K.~K.; Wu, Z.; and Arnold, F.~H. 2019.
\newblock Machine-learning-guided directed evolution for protein engineering.
\newblock \emph{Nature methods}, 16(8): 687--694.

\bibitem[{Yang and Wang(2019)}]{yang2019sample}
Yang, L.; and Wang, M. 2019.
\newblock Sample-optimal parametric q-learning using linearly additive
  features.
\newblock In \emph{International Conference on Machine Learning}, 6995--7004.
  PMLR.

\bibitem[{Yang and Wang(2020)}]{yang2020reinforcement}
Yang, L.; and Wang, M. 2020.
\newblock Reinforcement learning in feature space: Matrix bandit, kernels, and
  regret bound.
\newblock In \emph{International Conference on Machine Learning}, 10746--10756.
  PMLR.

\bibitem[{Yuan et~al.(2022)Yuan, Ni, Wang, Zhang, Cong, Szepesv{\'a}ri, and
  Wang}]{yuan2022bandit}
Yuan, H.; Ni, C.; Wang, H.; Zhang, X.; Cong, L.; Szepesv{\'a}ri, C.; and Wang,
  M. 2022.
\newblock Bandit Theory and Thompson Sampling-Guided Directed Evolution for
  Sequence Optimization.
\newblock \emph{arXiv preprint arXiv:2206.02092}.

\bibitem[{Zhang et~al.(2020)Zhang, Zhou, Li, and Gu}]{zhang2020neural}
Zhang, W.; Zhou, D.; Li, L.; and Gu, Q. 2020.
\newblock Neural thompson sampling.
\newblock \emph{arXiv preprint arXiv:2010.00827}.

\bibitem[{Zhou, Li, and Gu(2020)}]{zhou2020neural}
Zhou, D.; Li, L.; and Gu, Q. 2020.
\newblock Neural contextual bandits with ucb-based exploration.
\newblock In \emph{International Conference on Machine Learning}, 11492--11502.
  PMLR.

\end{thebibliography}

\appendix

\section{Related Work Details}\label{sec:related_work_details}
 Bayesian Optimization(BO)\cite{Mockus1989BayesianAT} is a classical method for optimizing protein sequences. \citet{sinai2020adalead} implements the code of an ensemble of models for Bayesian Optimization(BO) as one of its baselines. However, BO is computationally inefficient because it searches the entire sequence space in each round. LaMBO\citep{pmlr-v162-stanton22a} is also a BO-based algorithm that supports both single-objective and multi-objective, but it is again computationally inefficient on sequences with long lengths. Design by adaptive sampling (DbAS; \citet{brookes2018design}) uses a probabilistic framework as well as an adaptive sampling method to help explore the fitness landscape. Conditioning by adaptive sampling (CbAS; \citet{brookes2019conditioning}) is similar to DbAS but it also uses regularization to solve the pathologies for data far from training distribution. 
DyNA PPO\cite{Angermueller2020Model-based} uses proximal policy optimization for sequence design. However, DyNA PPO may search the space far from the wild type, which leads to a high mutation count. AdaLead\cite{sinai2020adalead} is an evolutionary algorithm with greedy selection: In each round, it selects the sequences near the best sequence derived in the last round on the fitness landscape to do recombination and mutation. Then it chooses the best sequence set according to the approximate model to query the ground-truth landscape model. 
PEX MufacNet\cite{ren2022proximal} is a local search method based on the principle of proximal optimization, with a MutFAC network tailored to mutated sequence data. It also follows a greedy strategy to select sequences to query and does not employ bandit-based exploration.

\section{Experiment Details}\label{sec:data_oracle}

\subsection{Bandit Exploration}
\label{subsec:ban}
We apply the bandit-based exploration mechanism to select sequences for fitness evaluation. 
In the UCB version, 
we use a parameter $\alpha$ to control the explore-exploit tradeoff and we calculate the UCB score using
$
    F_{\text{UCB}}(x) = \theta_{t - 1}^T \psi(x) + \alpha \sqrt{\psi^T(x) A_{t - 1}^{-1} \psi(x)},
$
where $\psi$ is some feature map, $\theta_t$ is the model parameter and the covariance matrix is defined as $M_t = \lambda\mathbf{I}+\sum_{i=1}^{t-1} \psi(x)\psi(x)^\top$. 
In the Thompson Sampling version, $F_t$ takes the form 
$
    F_{\text{TS}}(x) = \tilde \theta^T \psi(x),
$
where $\tilde \theta$ are parameters sampled from a posterior distribution $\mathcal{N}\left( \theta_{t - 1}, \alpha M_{t - 1}^{-1} \right)$.

We consider two predictive models in our tree search bandits, linear model, and neural networks.
For the linear model, we can simply take $\psi$ to be a pre-trained embedding $\phi=\psi$, and our experiment uses the TAPE \cite{tape2019} embedding, and $\theta_t$ is the closed form solution of ridge regression. For neural networks, we construct a neural net taking $\phi(x)$ as input and use the second last dense layer's output of the neural network as $\psi(x)$ for calculating bonus or posterior distribution, following the shallow exploration idea of \cite{pan2022neuralbandits}.

\subsection{Full Dataset Description} \label{sec:dataset_full}

\paragraph{Adeno-associated virus 2 capsid protein (AAV)} AAVs have been widely used in gene therapies. A particular interest is changing the sequence of the capsid protein to address the issues of tissue tropism and manufacturability. We collected dataset from \citet{Bryant2021}, which contains 283953 different length sequences engineered a region (positions 561-588) on the AAV2 VP1 protein for optimizing the viral viability. We limited the search space to the sequences which have the same length (28 amino acids) as the wildtype sequence and the search space is $20^{28}$.

\paragraph{TEM-1 $\beta$-Lactamase (TEM)} TEM-1 $\beta$-Lactamase is an E. coli protein that efficiently hydrolyzes penicillins and many cephalosporins, affording antibiotics resistance. It is used to understand the mutational effects and epistatic interactions between mutations in the fitness landscape. We collected the fitness data from \cite{Gonzalez2019-fr} which contains 16924 sequences. The sequence length is $286$ which results in $20^{286}$ search space.

\paragraph{Anti-SARS-CoV-2 antibody (AAYL49)} Recombinant antibody is a common therapeutic modality that is widely adopted in treating many diseases, including cancers, metabolic disorders, and viral infections. The Fv region of antibodies is the most important region for antigen binding, affording target specificity of the binding event. \citet{Engelhart2022} presented a dataset of quantitative binding scores for mutational variants of several antibodies against a SARS-CoV2 peptide. We use the AAYL49 heavy chain part of the dataset which includes 26454 sequences and the sequence length is 118 which results in $20^{118}$ search space.

\begin{table*}[th]
\caption{Dataset Summary}
\label{dataset_summary}
\vskip 0.15in
\begin{center}
\begin{small}
\begin{sc}
\begin{tabular}{c | c | c |c | c}
\toprule
Dataset & Aim & Size & WT Length & Search Space \\
\midrule
AAV \cite{Bryant2021}: & Viral Viability & 283953 & 28 & $20^{28}$ \\
TEM \cite{Gonzalez2019-fr}: & Thermodynamic Stability & 16926 & 286 & $20^{286}$ \\
AAYL49 \cite{Engelhart2022}: & Binding Affinity & 13922 & 118 & $20^{118}$ \\

\end{tabular}
\end{sc}
\end{small}
\end{center}
\vskip -0.1in
\end{table*}

\subsection{Data Preprocessing}

We preprocess each of the datasets to keep only the sequence and fitness information. The preprocessing of  each dataset is as follows:
\begin{itemize}
    \item \textbf{AAV: } We removed the sequence with type "stop" as the stop codon also remove the amino acids after the editable regions. The column "score" was used for fitness. The dataset after the preprocessing has 283953 sequences.
    \item \textbf{TEM: } We converted the mutated position to the full single and double mutation sequences using the wild-type and assign the corresponding single and double mutation fitness. The dataset after the preprocessing contains 16924 sequences
    \item \textbf{AAYL49: } We keep "MIT\_Target" and the "AAYL49" part of the dataset. We take the mean of the non-empty fitness score across different experimental repeats. We dropped the data with empty fitness scores for all three repeats. The preprocessed dataset contains 13922 sequences.
\end{itemize}

\subsection{Oracle Training}

For AAV and TEM datasets, we used the pretrained transformer model of TAPE \cite{tape2019} to generate a 768-dimension embedding as the input of the CNN model from \cite{sinai2020adalead} shown as in \ref{oracle_cnn}. We used a 80/20 train-test split for the preprocessed datasets with shuffle. The oracle was trained using the training set with learning rate $1 \times 10^{-4}$, batch size of $256$ for maximum of $3000$ rounds with an Adam optimizer. The training will be early stopped if the loss does not decrease for $100$ epochs.

\begin{table}[h]
\caption{CNN Architecture used by Oracle}
\label{oracle_cnn}
\vskip 0.15in
\begin{center}
\begin{small}
\begin{sc}
\begin{tabular}{c c}
\toprule
\textbf{Input}: & TAPE Embedding  \\
\midrule
1D - CNN & Number of Filters: 32, Kernal Size: 5 \\
& ReLU Activation \\
1D - CNN & Number of Filters: 32, Kernal Size: 5 \\
& ReLU Activation \\
1D - CNN & Number of Filters: 32, Kernal Size: 5 \\
& ReLU Activation \\
& Global Max Pooling \\
Dense & Output Size: 256 \\
& ReLu Activation \\
Dense & Output Size: 256 \\
& ReLu Activation \\
& Dropout: $p = 0.25$ \\
Dense & Output Size : 1
\end{tabular}
\end{sc}
\end{small}
\end{center}
\vskip -0.1in
\end{table}

For the AAYL dataset, the method used for AAV and TEM results in a low spearman correlation of $0.273$. Hence, we trained a downstream task from the pretrained TAPE transformer model from \citet{tape2019}. We also used the same used the same 80/20 train-test split and train the model with parameter suggested by \citet{tape2019} with batch size of 32, learning rate $1 \times 10^{-5}$ with linear warm up until loss does not decrease for 10 rounds. 

We tested the correlation between the oracle prediction and ground-truth fitness on the test set which shown in \ref{oracle_corr}

\begin{table}[h]
\caption{Oracle Correlations}
\label{oracle_corr}
\vskip 0.15in
\begin{center}
\begin{small}
\begin{sc}
\begin{tabular}{c | c | c |c}
\toprule
Dataset: & AAV & TEM & AAYL49 \\
\midrule
Spearman: & 0.838 & 0.713 & 0.502 \\
Pearson: & 0.876 & 0.681 & 0.584
\end{tabular}
\end{sc}
\end{small}
\end{center}
\vskip -0.1in
\end{table}

\subsection{Experiment Details}

We run all the exploration algorithms on the same neural network structure for fair comparison. Onehot encoding with CNN was used for the TEM and AAYL49 antibody datasets. We use TAPE \citep{tape2019} embedding with simple 2 layer fully connected neural network for the AAV dataset to compensate the limited sequence length.

All algorithms was run on NVIDIA A10G GPU for 50 different random seeds. With 4-5 algorithms running in parallel, the total time is around 6 hours. The neural network is run with learning rate of $10^{-4}$, batch size of $64$ for maximum of 3000 epochs with a early stopping when loss does not decreases for 50 rounds. The baseline algorithms was run using their recommended setting.

\subsection{Additional Experiments on Other Landscape} \label{sec:add_exp}

We also run experiment on the 3gfb oracle designed by \citet{Thomas2022.10.28.514293} for all algorithm using the same setting as \ref{sec:exp} on the same model of onehot encoding and CNN. The maximum fitness curve with the mutation count is shown in \ref{fig:3gfb}. For convience, the legend of the fitness plot was in descending order regarding max fitness performance. We could see the performance is consistent with the results on other oracles with TreeNeuralTS and TreeNeuralUCB outperform other exploration algorithms. The mutation counts of TreeNeuralTS and TreeNeuralUCB are comparable to other algorithms and are reletively low compare to the wild-type sequence length of 347. The mutation count and fitness of BO are both zero as it fail to find any sequence with higher fitness than wild-type so the maximum fitness sequence was the initial wild-type sequence for all rounds. 

\begin{figure*}[t]
    \centering
    \includegraphics[width = 0.49\textwidth, height = 0.4\textwidth]{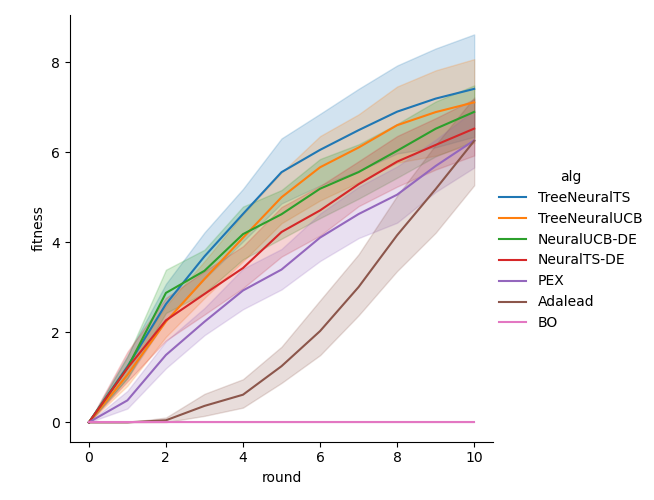}
    \includegraphics[width = 0.49\textwidth, height = 0.4\textwidth]{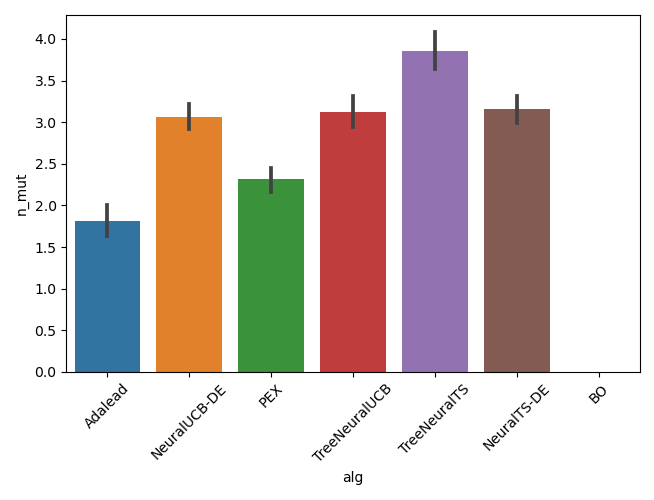}
    \caption{Experiment Result on 3gfb \citep{Thomas2022.10.28.514293} Oracle}
    \label{fig:3gfb}
\end{figure*}

\subsection{Simulation Results and Details}
We are also interested in testing our algorithms on synthetic data to bridge the gap between theory and. We use Gaussian Process to generate synthetic data and build a new oracle\citep{williams2006gaussian}. For consistency, our simulation is controlled within 10 rounds. In Figure \ref{fig:comparison_simulation}, we can find that tree-based algorithms can achieve higher fitness scores in 10 rounds and are closer to the maximum fitness than those algorithms which are not tree-based. Also, in the left panel of Figure \ref{fig:comparison_simulation}, tree-based algorithms' regret converges and maintains a low level while non-tree algorithms have the trend to increase in future rounds like LinUCB-DE or have higher regret.

\begin{figure*}[t]
\centering
   \includegraphics[width=0.49\textwidth]{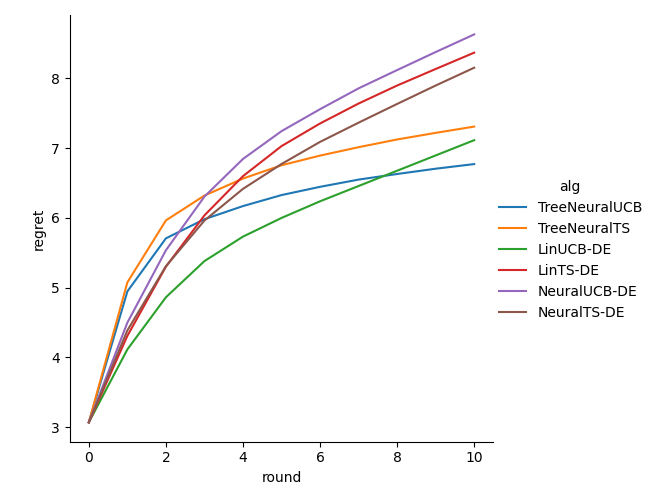}   
   \includegraphics[width=0.49\textwidth]{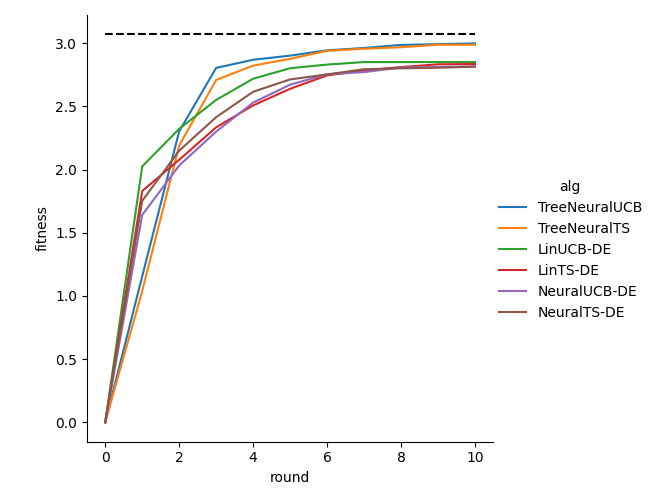}
      \caption{Regret and fitness curves of algorithms with comparison to baselines over synthetic data.}\label{fig:comparison_simulation}
\end{figure*}

\newpage

\subsection{Code Availability}
We make our code public anonymously to help confirm the details. Our baseline methods like BO, PEX, Adalead, CbAS and DbAS are implemented by their authors and we directly use their code. The anonymous link to our code is https://anonymous.4open.science/r/TreeSearchDE-public-0D26/README.md

\newpage

\section{Visualization of the Full Tree Search Process by Round}\label{sec:full_search_path}

\begin{figure*}[h]
 \centering
    \subfloat[$t = 1$]{\includegraphics[width=0.3\textwidth]{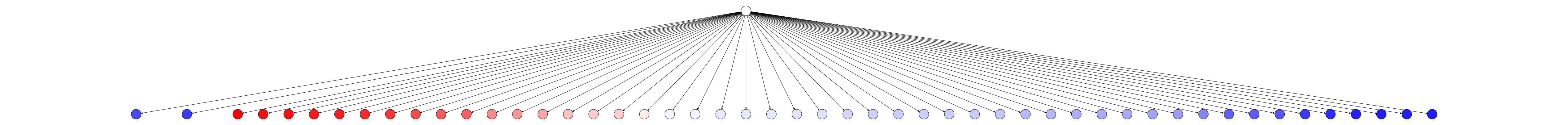}}
    \subfloat[$t = 2$]{\includegraphics[width=0.3\textwidth]{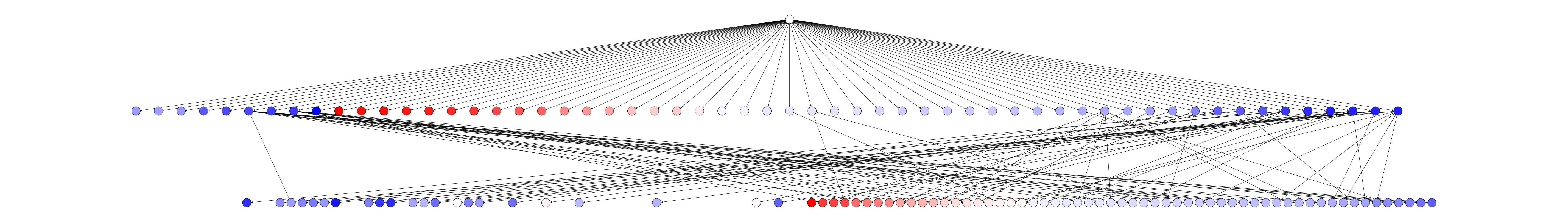}}
    \subfloat[$t = 3$]{\includegraphics[width=0.3\textwidth]{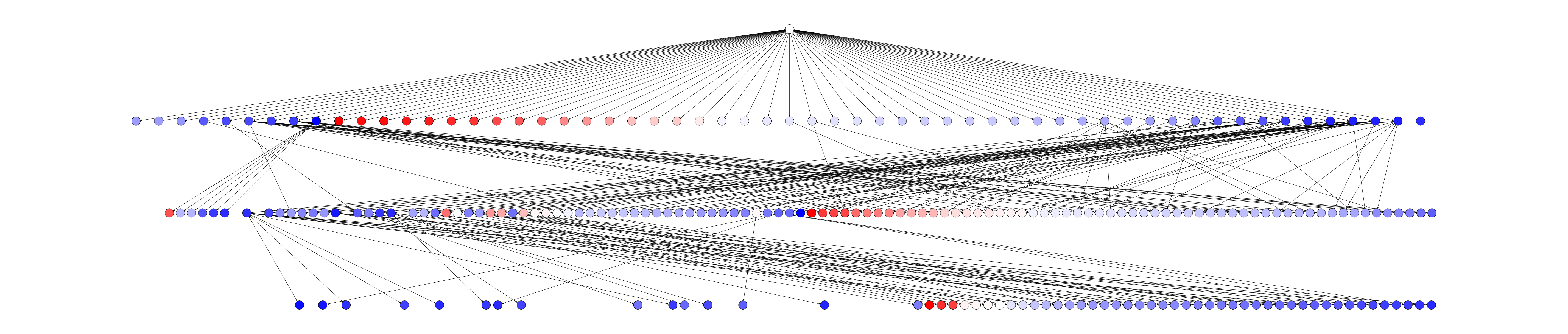}}
    
    \subfloat[$t = 4$]{\includegraphics[width=0.3\textwidth]{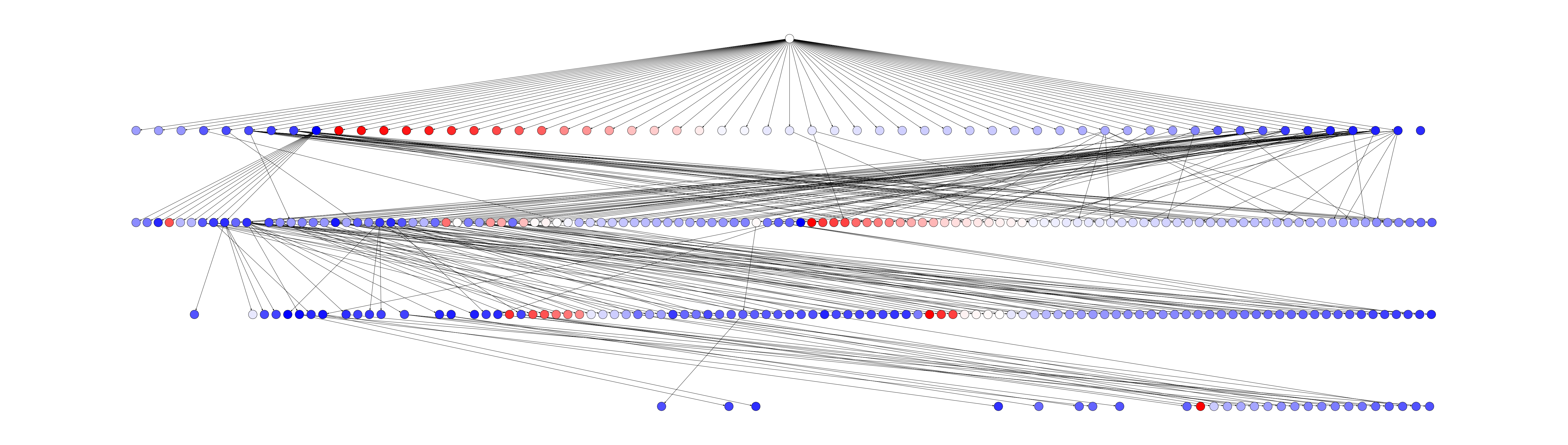}}
    \subfloat[$t = 5$]{\includegraphics[width=0.3\textwidth]{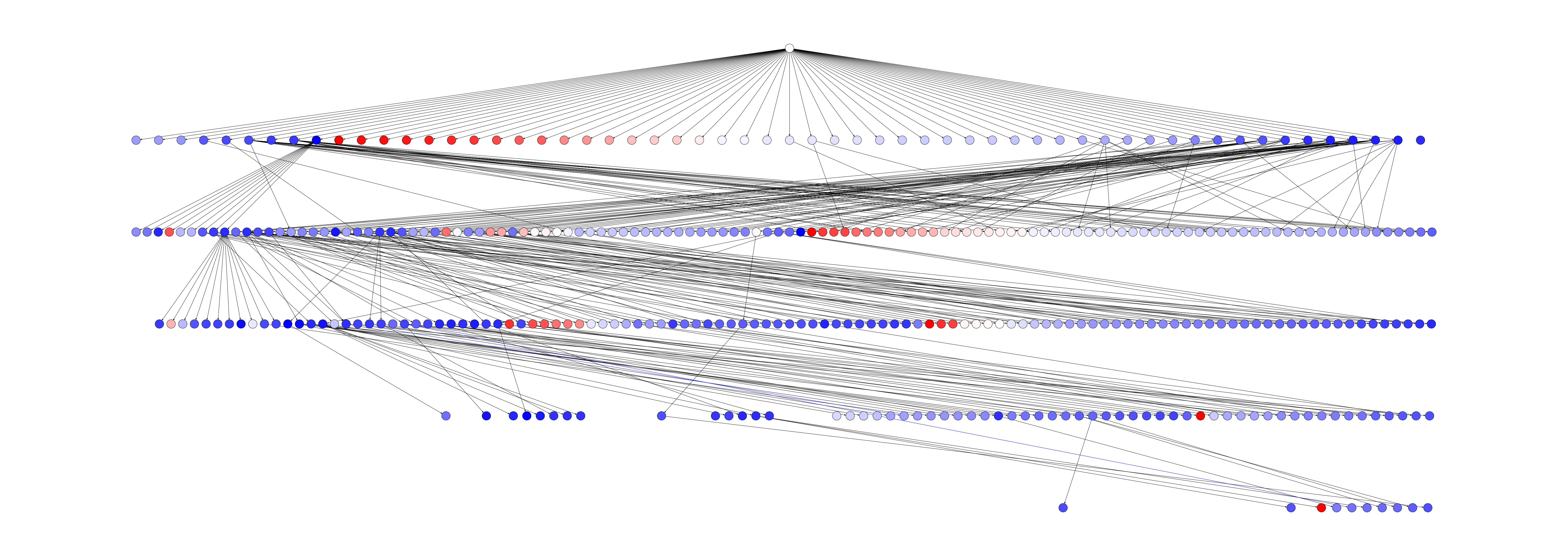}}
    \subfloat[$t = 6$]{\includegraphics[width=0.3\textwidth]{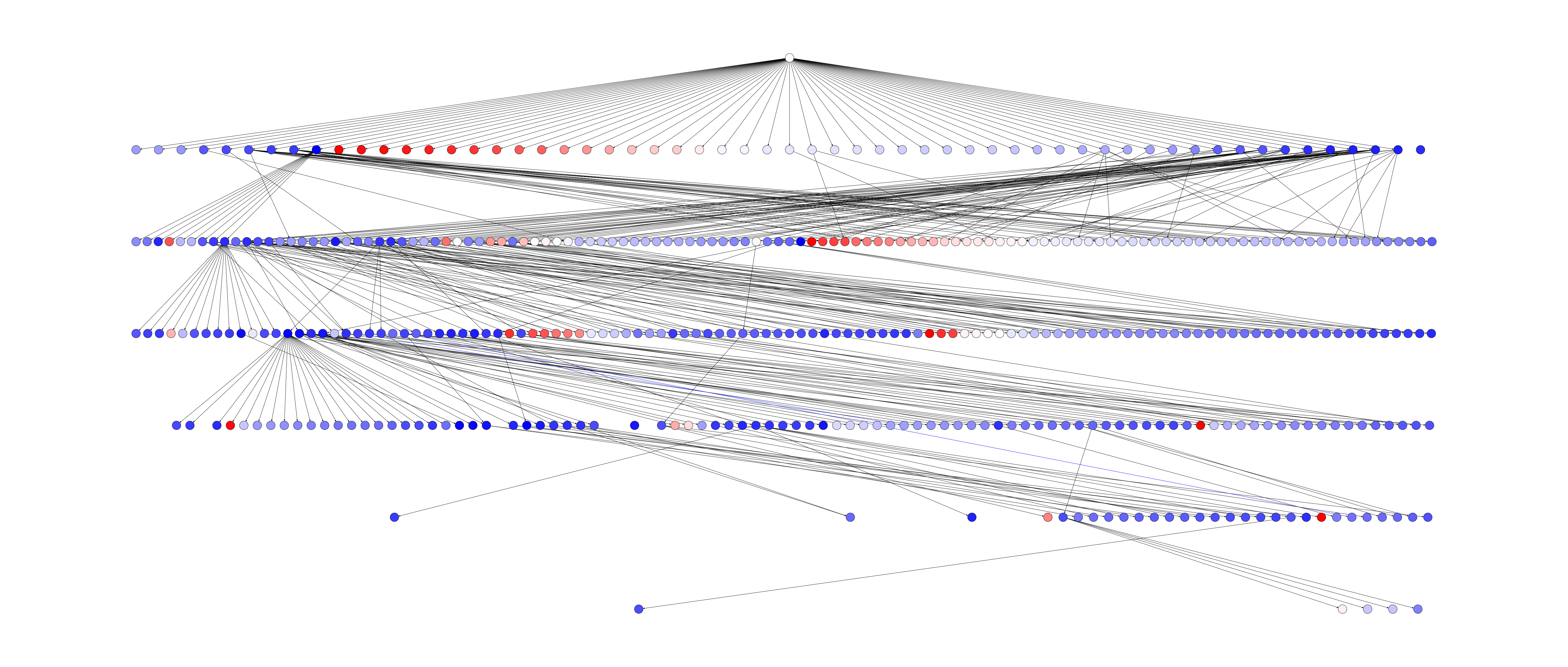}}
    
    \subfloat[$t = 7$]{\includegraphics[width=0.3\textwidth]{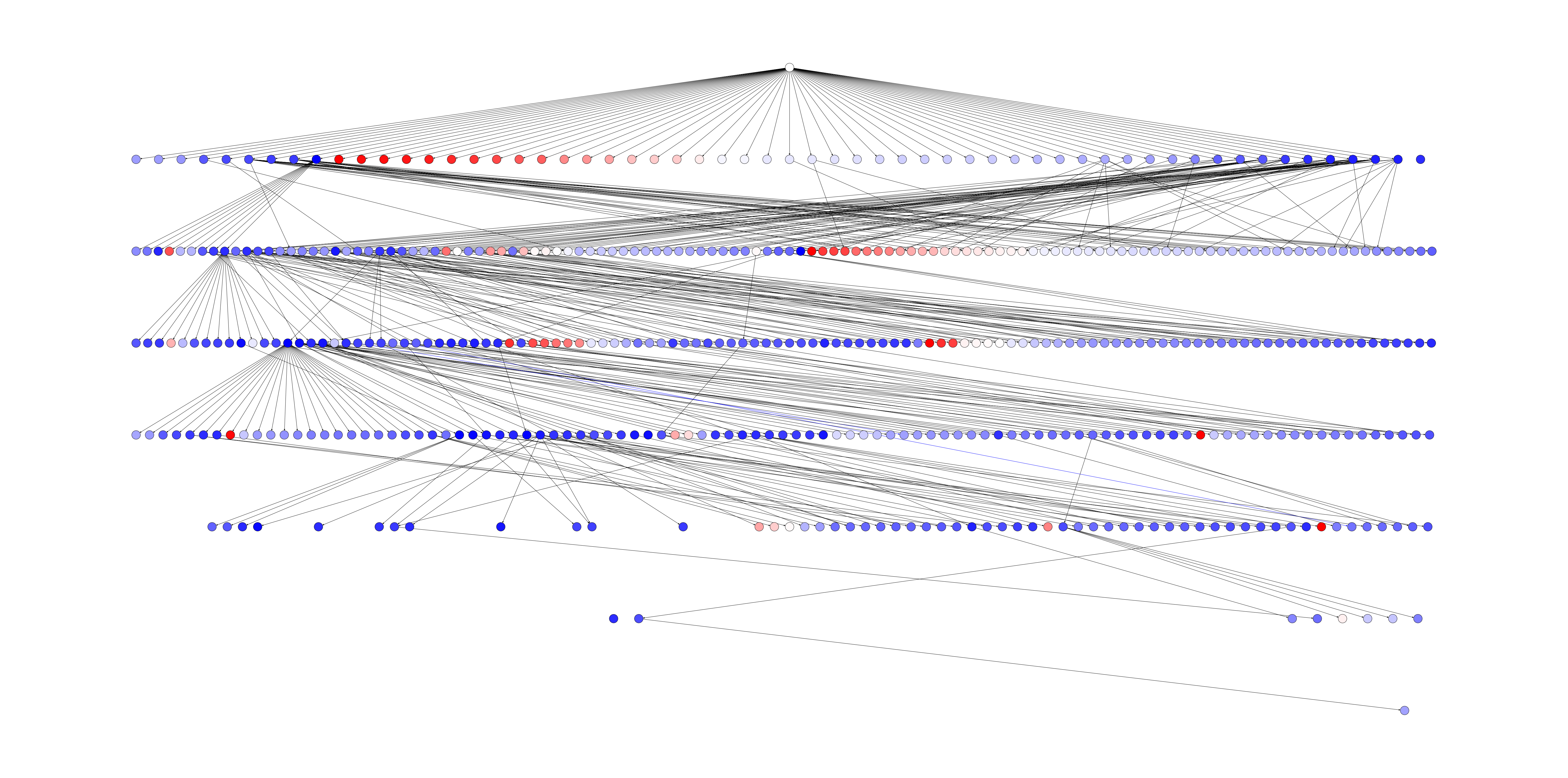}}
    \subfloat[$t = 8$]{\includegraphics[width=0.3\textwidth]{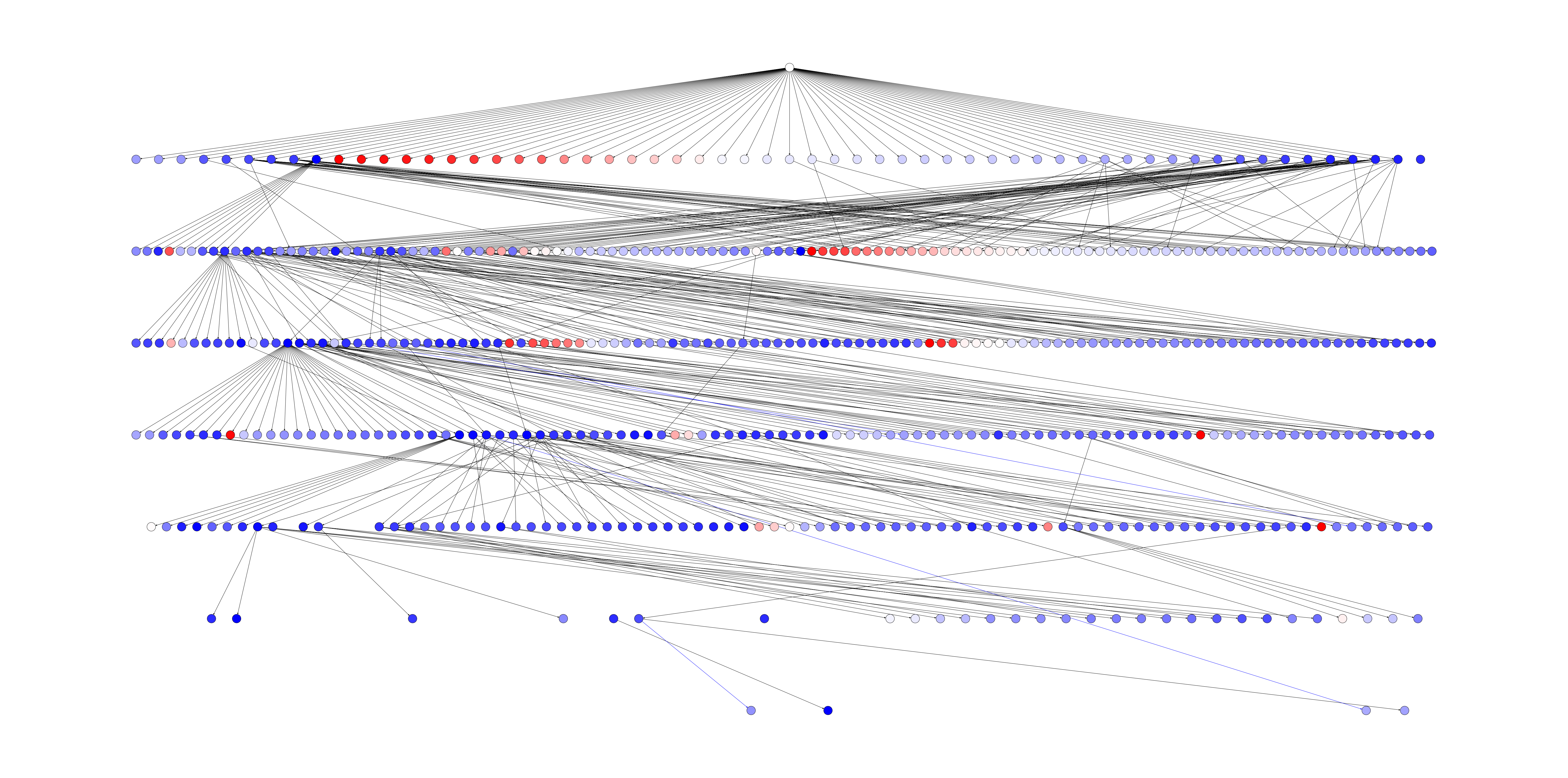}}
    \subfloat[$t = 9$]{\includegraphics[width=0.3\textwidth]{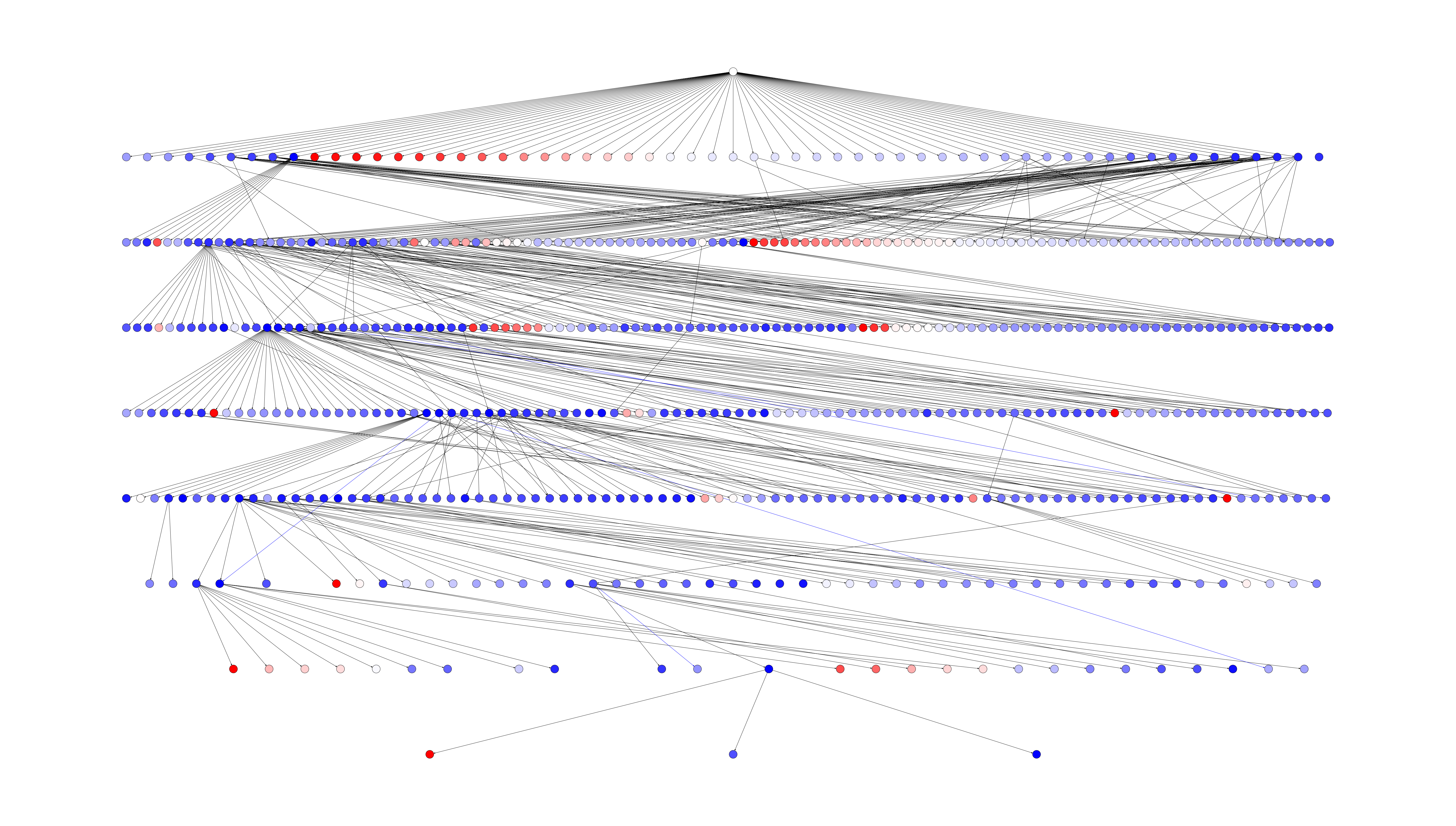}}
    
    \subfloat[$t = 10$]{\includegraphics[width=\textwidth]{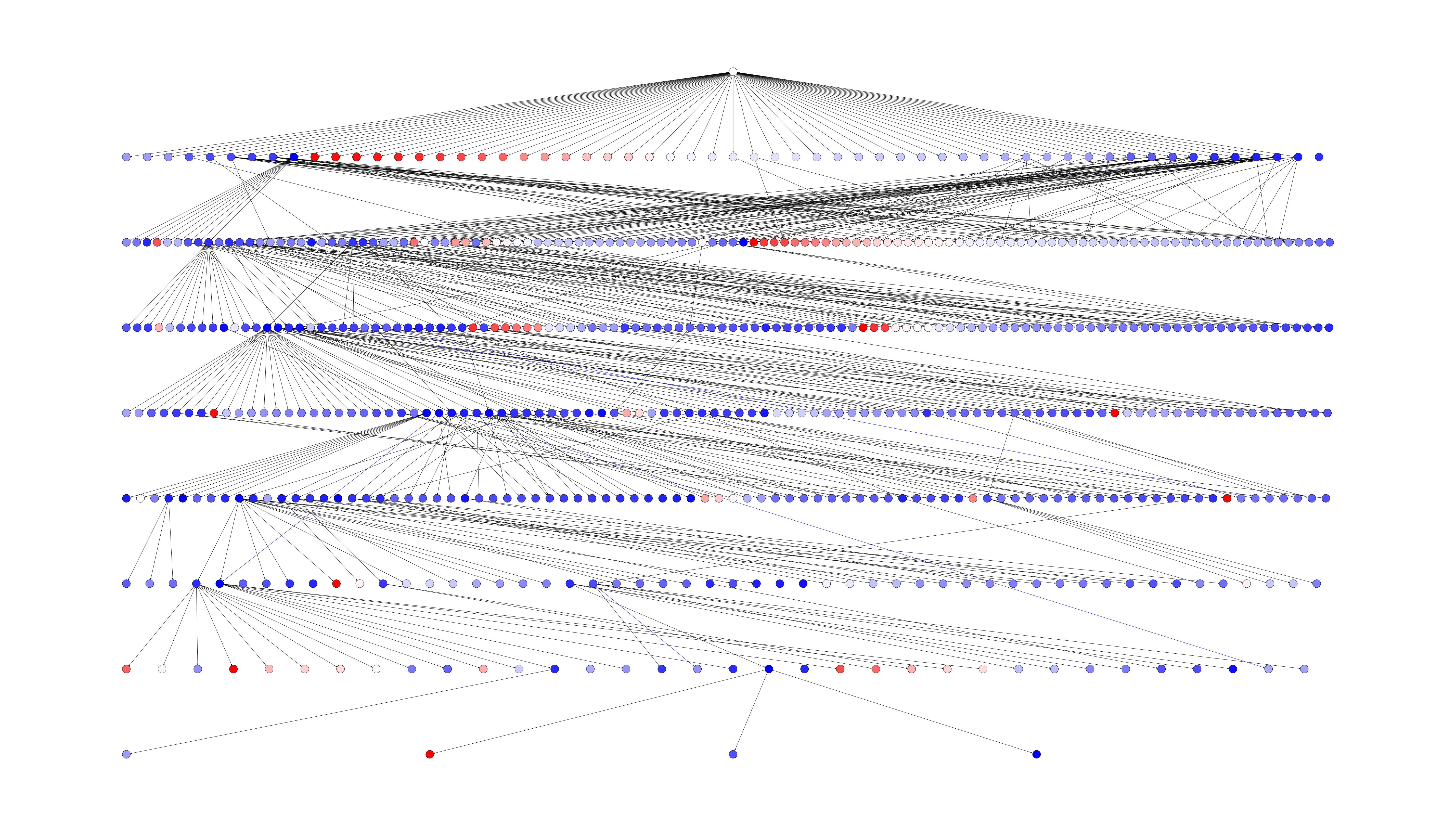}}
      \caption{Visualization of the full tree search process by round}\label{fig:full_tree}
\end{figure*}

\clearpage

\section{Diversity-Fitness Trade Off}\label{sec:diversity}

We aim to improve the diversity of our algorithm by adding a diversity constraint. The diversity score of the children node is defined as its distance to the center of nodes explored in the previous rounds in the embedding space, i.e.,
\begin{equation}
    S_{diversity} = \sqrt{\sum_{i=1}^n {(\phi(x)_i - {\overline{\phi(x^{explored})}}_i)^2}},
\end{equation}
where $\overline{\phi(x^{explored})}$ is the center of explored sequences in the embedding space and n is the dimension of the embedding space.
Then the score function for determining the candidate node set is modified as:
\begin{equation}
    S_{total} = S_{origin} + q * S_{diversity},
\end{equation}
where $q$ is the weight for $S_{diversity}$.

\subsection{Experiment}\label{sec:diversity_exp}

We tested the effect of different parameters of $q$'s on both diversity and fitness. In particular, we run the algorithm with $q \in \{0, 0.01, 0.03, 0.05, 0.08, 0.1, 0.3, 0.5, 0.8, 1, 2, 5, 10\}$ each for 10 random seeds. From \ref{fig:diversity}, we observed a trade-off between fitness and diversity, i.e., a higher $q$ will lead to a higher diversity but at the same time a lower average fitness.

\begin{figure*}[t]
    \centering
    \subfloat[Average Fitness]{\includegraphics[width = 0.5\textwidth]{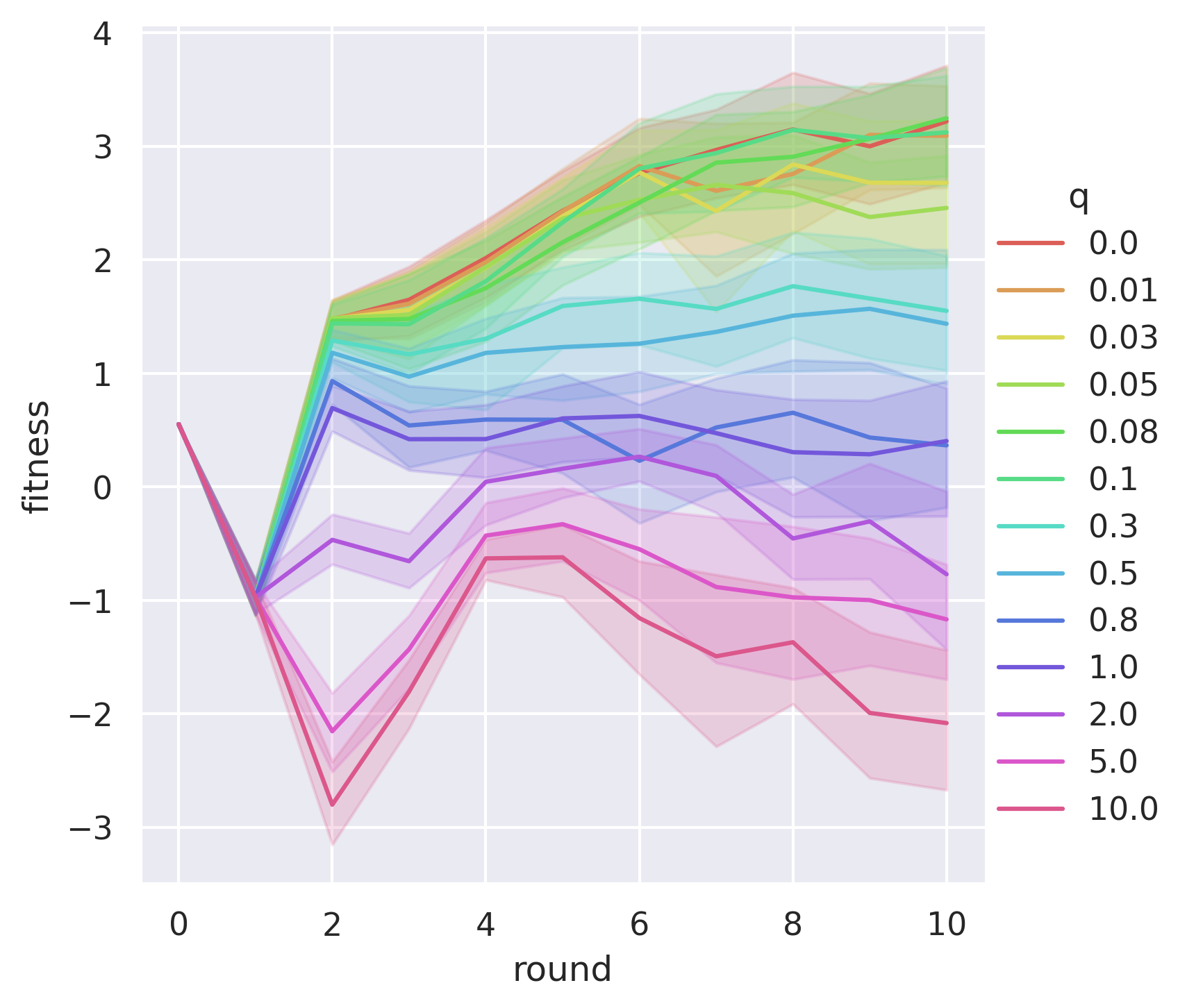}}
    \subfloat[Diversity Score]{\includegraphics[width = 0.5\textwidth]{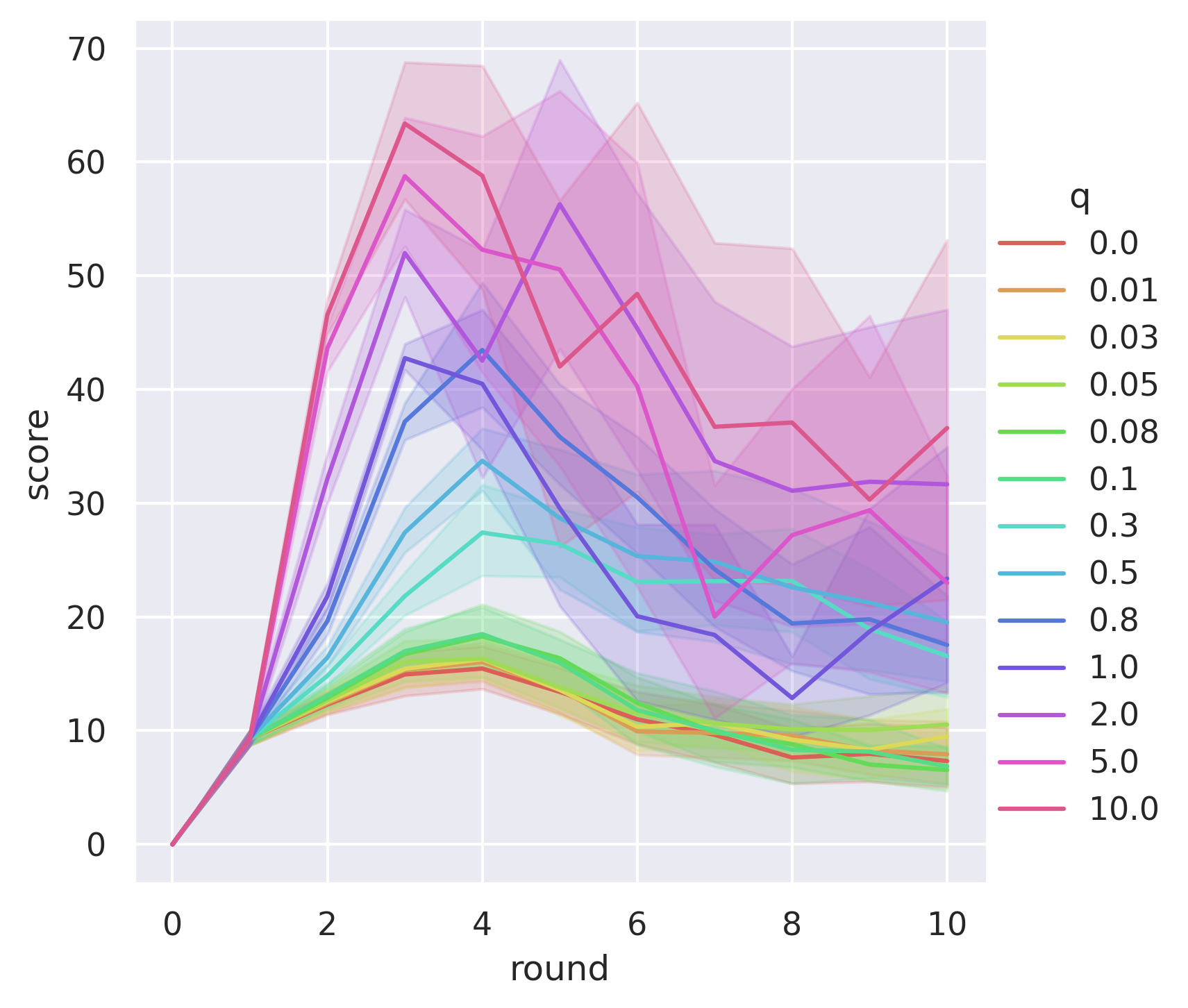}}
    \caption{Experiment Results for Different $q$'s}
    \label{fig:diversity}
\end{figure*}

\subsection{t-SNE}

We plot a t-SNE plot for all $q$'s for one particular random seed shown in Figure \ref{fig:tsne_diversity}. The t-SNE plot verifies the tread-off we discussed in \ref{sec:diversity_exp}. As the $q$ increases, the queried sequences are more spread out on the t-SNE plot which verifies the improvement in diversity. In contrast, when the $q$ is low, the queried sequences are concentrated on a smaller region of the t-SNE plot (local optimal) especially in the later round (shown in red color), and have a better fitness caused by high fitness sequences in this local optimal region.

\begin{figure*}[h]
    \centering
    \includegraphics[width = \textwidth]{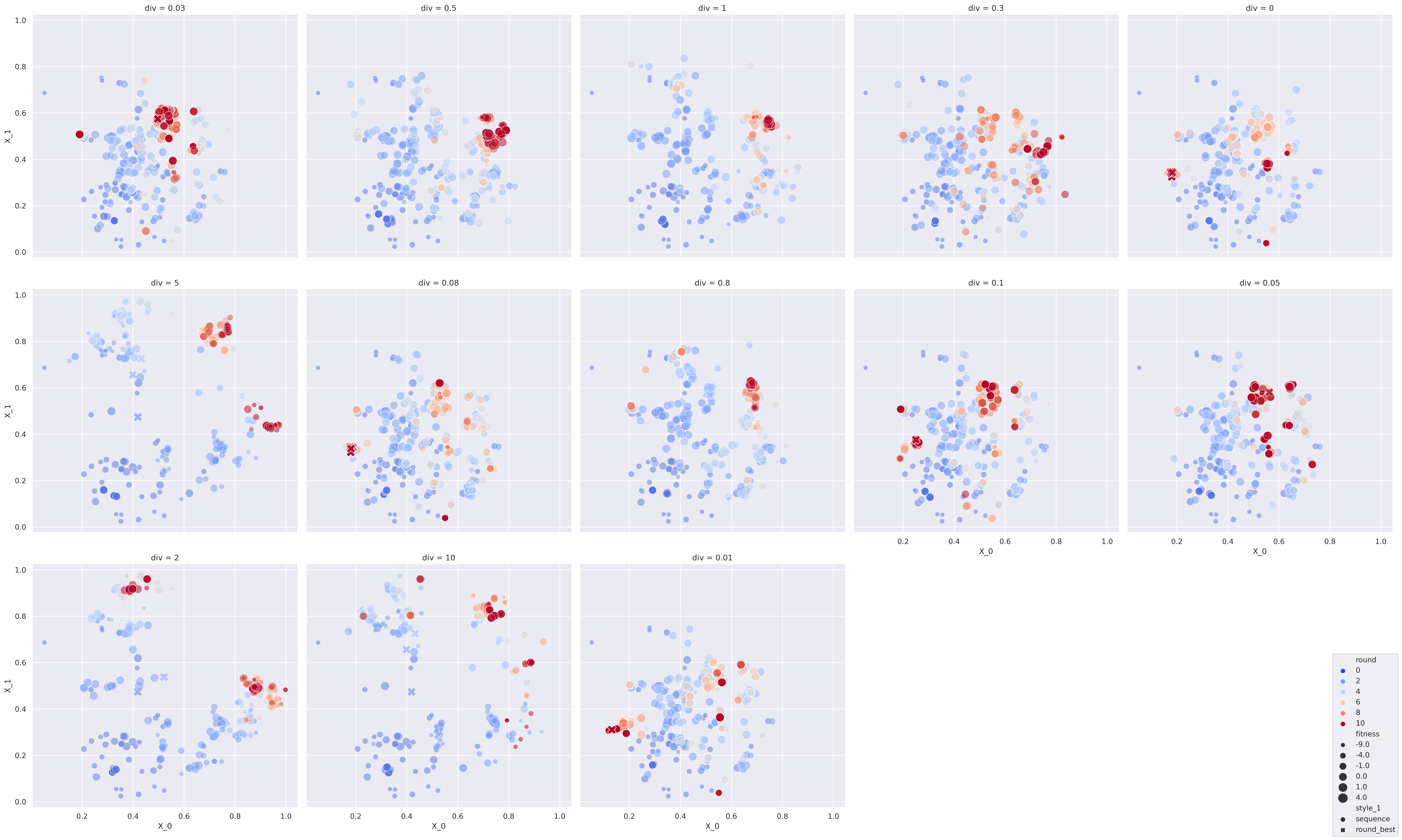}
    \caption{t-SNE Plot for Different $q$'s}
    \label{fig:tsne_diversity}
\end{figure*}

\section{Ranking of Algorithms}\label{sec:ranking_of_algorithm}

In Figure \ref{fig:comparison_baseline_with_rank}, the legend ranks the algorithm in descending order regarding max fitness performance for convenience. In Table \ref{table:performance_detail}, we show the detailed values for all algorithms.

\begin{table}[h]
\caption{Average Max Fitness of Algorithms}
\label{table:performance_detail}
\vskip 0.15in
\begin{center}
\begin{small}
\begin{sc}
\begin{tabular}{c | c | c |c}
\toprule
Dataset: & AAV & AAYL49 & TEM \\
\midrule
TreeNeuralUCB: & 4.3723 & -0.4976 & 2.394493 \\
TreeNeuralTS:  & 4.3056 & -0.4132 & 2.356243 \\
NeuralTS-DE:   & 3.7363 & -0.4773 & 2.108297 \\
NeuralUCB-DE:  & 3.7697 & -0.4421 & 1.978440 \\
PEX:           & 3.9286 & -0.4482 & 2.283808 \\
Adalead:       & 1.8835 & -0.9322 & 2.225310 \\
BO:            & 0.9299 & -1.0038 & 1.215830 \\
LaMBO:         & 3.7085 & -0.5813 & N/A \\
DbAS:          & 2.8481 & -0.7580 & 1.035354
\end{tabular}
\end{sc}
\end{small}
\end{center}
\vskip -0.1in
\end{table}

\begin{figure*}[h]
\centering
   \includegraphics[width=0.48\textwidth]{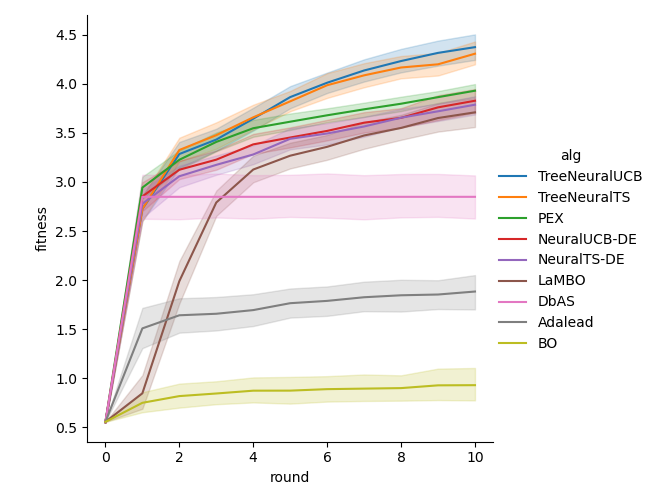}  
   \includegraphics[width=0.48\textwidth]{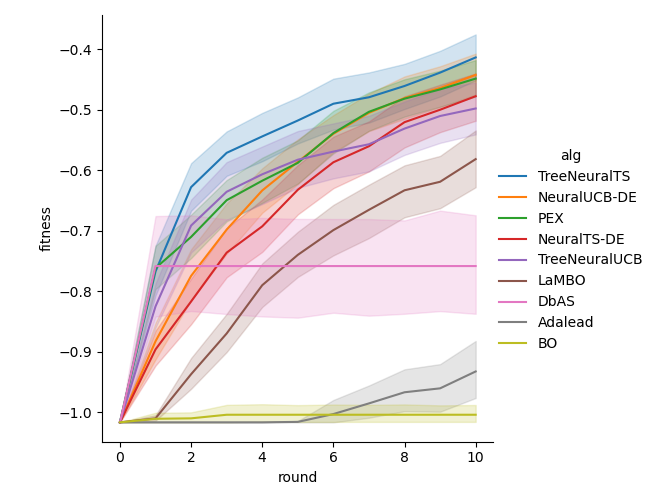}
    \includegraphics[width=0.48\textwidth]{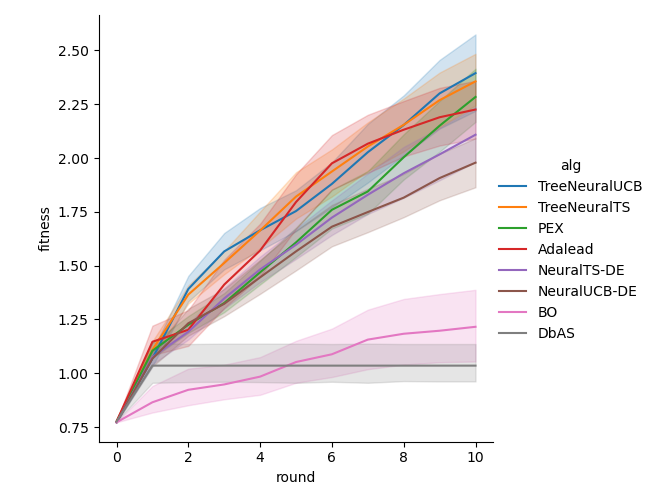}
      \caption{TreeBandit Compared to baselines}\label{fig:comparison_baseline_with_rank}
\end{figure*}

\section{Discussion on Limitations}

Our theoretical result serves its purpose well to provide an understanding of using local mutation to optimize protein utility. 
Our bound is a Bayesian regret bound. Also, it is an open question to derive a frequentist bound for, which is of interest in practice when ground truth f is agnostic. Current Bayesian regret does not directly apply to a frequentist one.

The functionalities of the protein sequences are complicated. So far, none of our oracles or oracles from Adalead and PEX-Mufacnet, or the pre-trained protein language models such as ESM or TAPE could accurately predict the whole landscape. However, we believe the significance of our experiment is not to build an oracle that could perfectly represent the landscape but instead to demonstrate a new exploration algorithm that could be applied to all protein landscapes. As from the experiments that our algorithm performs well on all oracles we built as well as on the new dataset from antibodies, we should expect our algorithm will perform well on the actual landscape.

Also, we are actively building our wet lab to apply our algorithm into practice, but it is acknowledged that wet-lab experiment takes a very long time for validation.

We do not allow using our models in production without authorization due to potential negative social impacts.
\section{Theory Details}
\subsubsection{Preliminaries of GP} 
Conditioned the set of screen data $\{(\phi(x_{i}), \tilde F(x_{i}))\}_{i=1}^{t-1}$ collected up to step $t$, the posterior distribution of $f^{\star}$ corresponding to the prior $\mathcal{GP}\rbra{0, k(\cdot, \cdot)}$ is still a Gaussian process $\mathcal{G P}(m_t, k_t)$.
\begin{definition}[Posterior $\mathcal{G P}(m_t, k_t)$]
    $m_t: \mathbb{R}^d \rightarrow \mathbb{R}$ and $k_t: \mathbb{R}^d \times \mathbb{R}^d \rightarrow \mathbb{R}$ are defined on the embedding space and have the form:
    \begin{align}
    \label{equ_post_mean}
    m_t(\phi(x)) &= \mathbf{k_{t-1}}(\phi(x))^{\top} \left(K_{t-1}+ \lambda I \right)^{-1} \tilde F_{t-1},\\
    \nonumber k_t \left(\phi(x), \phi(x^{\prime})\right) &= k\left(\phi(x), \phi(x^{\prime})\right) \\
    \label{equ_post_cov}& -\mathbf{k_{t-1}}(\phi(x))^{\top} \left(K_{t-1}+ \lambda I \right)^{-1}\mathbf{k_{t-1}}(\phi(x^{\prime})),
    \end{align}
    where $\mathbf{k_{t}}(\phi(x))=\left(k\left( \phi(x_1), \phi(x) \right), \cdots, k\left( \phi(x_{t}), \phi(x) \right)\right)^{\top}$, $K_t$ is the kernel matrix $\sbra{k(\phi(x) , \phi(x^{\prime}) )}_{x,x^{\prime}\in A_t}, A_t = \{x_1, \cdots, x_t\}$ and $\tilde F_t = \rbra{\tilde F(x_1), \cdots, \tilde F(x_t)}^{\top}$. 
    \end{definition} 
To establish cumulative regret bounds for GP optimization, a quantity to define is the maximum information gain $\gamma_{t}$: 
\begin{definition}[Maximum Information Gain]
    Define the maximum information gain over $t$ rounds as
    \begin{equation}
     \gamma_t:=\max _{A \subset \mathbb{R}^{d}: |A|=t} \operatorname{I} \left(\mathbf{y}_A ; \mathbf{f^{\star}}_A\right),
    \end{equation}
    where $\operatorname{I} \left(\mathbf{y}_A ; \mathbf{f^{\star}}_A\right)$ denotes the mutual information between $\mathbf{f^{\star}}_A:= [f^{\star}(\phi(x))]_{x \in A}$ and $\mathbf{y}_A: = \mathbf{f^{\star}}_A + \boldsymbol{\epsilon}_A$, where $\boldsymbol{\epsilon}_A \sim \mathcal{N}\left(0, \lambda I_{|A|}\right)$. 
    Thus $\operatorname{I} \left( \mathbf{y}_A ; \mathbf{f^{\star}}_A \right)=\frac{1}{2} \log \operatorname{det} \rbra{ I + \lambda^{-1} K_A}$, where $K_A=\left[k\left(\phi(x), \phi(x^{\prime})\right)\right]_{x, x^{\prime} \in A}$. 
\end{definition}

\subsubsection{Rarely Switching in $f_t$}
\label{subsec:rare_swt}
Posterior sampling is a natural choice to update $f_t$, that is sample $f_t \sim \mathcal{G P}(m_t, k_t)$ at each step $t$. For the sake of a tighter regret bound, we analyse the version of Alg.\ref{alg:meta} where let $V_t= K_{t-1} + \lambda I$,
\begin{equation}
        \label{ft:rare_switch}
        \left\{
            \begin{array}{cc}
            f_t \sim \mathcal{G P}(m_t, k_t), &\text{if } \operatorname{det}(V_t) / \operatorname{det}(V_{pre(t)}) > 2;\\
            f_t \leftarrow f_{t-1}, &\text{otherwise.}
            \end{array}
        \right.
       \end{equation}
with $pre(t)$ denotes the last time step provious to $t$ when $f_t$ is sampled from the posterior.
Here, we keep track of the growth in $\operatorname{det}\rbra{K_{t} + \lambda I}$ and keep $f_t$ the same unless the determinant is doubled. Rarely switching is a common technique in online learning \cite{abbasi2011improved, lansdell2019rarely} to save computation and it helps stabilize Alg.\ref{alg:meta} by reducing the fluctuation in $f_t$ over time. By using this adaptive update schedule, we will show in Theorem \ref{thm:main} that a total of $O \rbra{\gamma_T}$ updates is sufficient without compromising the efficiency in learning $f^{\star}$.
\section{Proof of Themorem \ref{thm:main}}
\label{appx_proof}

Recall the definition of Bayesian regret and write out an equivalent expression with embedding $\phi(x)$:
\begin{align}
\label{equ:rgt}
        \operatorname{BayesRGT}(T) &= \mathbb{E}\left[\sum_{t=1}^T \rbra{ \max_{x \in \bar{\mathcal{X}} } F(x) - F(x_{t})} \right] \notag \\
        & \leq \mathbb{E}\left[\sum_{t=1}^T \rbra{ \max_{\phi(x) \in \mathcal{D} } f^{\star}(\phi(x)) - f^{\star}(\phi(x_t))} \right],
\end{align}
as well as Condition \ref{cond:local_argmax} characterizing the max-fitness series $\{x_t\}^T_{t=1}$ to study:
\begin{align}
\label{equ:ft}
    F_t( x_{t} ) &\geq \max \{ F_t(x) \mid \|\phi(x)-\phi(x_{t-1})\| \leq r\} \notag \\
    \Leftrightarrow f_t(\phi( x_{t}) ) &\geq \max \{ f_t(\phi(x)) \mid \|\phi(x)-\phi(x_{t-1})\| \leq r\}.
\end{align}
From the RHS of (\ref{equ:rgt}) and (\ref{equ:ft}), the embedding vectors $\{\phi(x_t)\}^T_{t=1}$ fully determine the regret and identify the max-fitness sequence  $\{x_t\}^T_{t=1}$. Therefore, the proof of Theorem \ref{thm:main} can be conducted on the embedding level, that is, essentially it suffices to prove Alg.\ref{alg:evo_ker_mnt_TS}(a revision of Alg.\ref{alg:meta}) satisfies
\begin{align}
        \operatorname{BayesRGT}(T) &\leq \mathbb{E}\left[\sum_{t=1}^T \rbra{ \max_{x \in \mathcal{D} } f^{\star}(x) - f^{\star}(x_t)} \right] \notag \\
        &= O \rbra{  \beta_{T}\sqrt{\lambda T \gamma_{T}} + B \gamma_{T} \rbra{1 + \frac{4 L_{\phi}^2 N^2}{r^2}}},
\end{align}
where $\beta_T = O\left( \mathbb{E} \sbra{\|f^{\star}\|_{k}}+\sqrt{d \ln T} + \sqrt{\gamma_{T-1}}\right)$ when the noise level $\lambda = 1 + \frac{1}{T}$.

\paragraph{Important Remark on Notations and Alg.\ref{alg:evo_ker_mnt_TS}.} 
\begin{itemize}
    \item Throughout the proof, we abuse the notation $x$ for an arbitrary sequence in $\mathcal{X}$ to represent its known $d$-dim embedding vector $\phi(x)$ unless clarified by text. Alg.\ref{alg:evo_ker_mnt_TS} rewrites Alg.\ref{alg:meta} on the embedding level (translating each step of Alg.\ref{alg:meta} into what happens to the embedding vectors) to reflect our reduction in analysis.
    \item Alg.\ref{alg:evo_ker_mnt_TS} integrates the pseudo-code for updating $f_t$ corresponding to (\ref{ft:rare_switch}) (Line 4-9) and Condition \ref{cond:local_argmax} by rewriting it as (\ref{equ:cond}) (Line 11). Recall $K_t$ is the kernel matrix, $\lambda$ is the std. of noise, $m_t$ and $k_t$ define the posterior GP, and $u(x_t)$ ($x_t$: embedding vector) corresponds to the noisy feedback $\tilde F(x_t)$ ($x_t$: original sequence in $\mathcal{X}$).
    \item For technical reason, our analysis applies to a modified version of Thompson Sampling with discretization over $\mathcal{D}$ such that
    \begin{equation}
        |f^{\star}(x) - f^{\star}([x]_t)| \leq \frac{1}{t^2}, \quad \forall x \in D, 
    \end{equation}
    $[x]_t$ is the nearest point to $x$ in $\mathcal{D}_t$. Such discretization can be achieved with size $|\mathcal{D}_t| = (L_{\phi}N \|f^{\star}\|_k d M t^2)^d$, $M = \sup _{x \in D} \sup _{j \in[d]}\left(\left.\frac{\partial^2 k(p, q)}{\partial p_j \partial q_j}\right|_{p=q=x}\right)^{1 / 2}$
    The same discretization is adopted by \cite{chowdhury2017kernelized}. 
    
    And in the tree search step, each expanded node $x$ on the search tree is judged by the value $f_t([x]_t)$ in considering whether to keep it in the active set.
\end{itemize}

\begin{algorithm}[htb!]
        \caption{Alg. \ref{alg:meta} Rewritten in the Embedding Space with Monitored Posterior Updates}
        \label{alg:evo_ker_mnt_TS}
        \begin{algorithmic}[1]
            \STATE \textbf{Input:}
            total rounds $T$.
        
            \STATE \textbf{Initialize:} $ x_0 \in \mathcal{D}$, $ k=0, V_0 = \textbf{0}$ and  dataset $S_0 \leftarrow \emptyset$.
            
           \FOR{ $t=1,\ldots , T$} 
                \STATE \textbf{Kernel matrix  monitoring:}
                    \begin{equation*}
                        V = K_{t-1} + \lambda I.
                    \end{equation*}

                \IF{$\operatorname{det}(V) / \operatorname{det}(V_{k}) > 2 $}
                    \STATE \textbf{Update:} $k \leftarrow k+1, V_{k} \leftarrow V.$
                    \STATE \textbf{Discretization}: construct a discretization $D_t \subset D$.
                    \STATE  \textbf{Thompson Sampling:}  For all $x \in D_t$, sample $f_t(x)$ from
                     \begin{equation*}
                         \mathcal{G P}(m_t, k_t).
                     \end{equation*}
                \ELSE 
                \STATE $f_t(\cdot) \leftarrow f_{t-1}(\cdot)$, $\forall x \in D_{t-1}$.
                \ENDIF

                \STATE \textbf{Tree Search:} expand the search tree and select $x$ by value $f_t([x]_t)$, the filtered-out iterate $x_t$ satisfies
                \begin{equation}
                \label{equ:cond}
                    f_t( x_{t} ) \geq \max \{ f_t(x) \mid x: \|x- x_{t-1}\| \leq r\}.
                \end{equation}
                
                \STATE \textbf{Data collection:}
                $S_t \leftarrow S_{t-1} \cup \{(x_{t}, u(x_{t}))\}.$
            \ENDFOR
        \end{algorithmic}
        \end{algorithm}

\subsection{Tree Search and Proximal Iteration}
Given the previous iterate $x_{t-1}$ and an estimate $f_t$ of the fitness, (\ref{equ:cond}) ensures $x_{t}$ is at least as good as the optimal solution of the locally constrained optimization:
\begin{align}
\label{constrained_opt}
    \nonumber &\max_{x} f_t(x),\\
    s.t. \quad & \|x - x_{t-1}\|_2 \leq r.
\end{align}

Consider the proximal point of $x_{t-1}$ under $f_t$ with step size $\alpha$: 
\begin{equation}
\label{proximal}
    \tilde x_t = \mathcal{P}_{\alpha, f_t} (x_{t-1}):= \arg\max_{x \in \mathcal{D}}{f_t(x) - \frac{1}{2 \alpha} \|x - x_{t-1}\|^2}.
\end{equation}
Note that the expression of (\ref{proximal}) differs from the classic definition of proximal point by the constraint on $\mathcal{D}$. However, this is a minor difference as the objective ${f_t(x) - \frac{1}{2 \alpha} \|x - x_{t-1}\|^2}$ is strongly concave on $\mathcal{D}$
and $\tilde x_t$ is in the interior of $\mathcal{D}$ under Assumption \ref{asmp:ft_cvx}. The optimal condition of (\ref{proximal}) ensures
\begin{equation}
    \tilde x_{t} - x_{t-1} \in \alpha \cdot \nabla f_t(\tilde x_t).
\end{equation}

By setting a proper $\alpha$, we are able to show $x_t$ has more improvement in $f_t$ value than $\tilde x_t$.
\begin{proposition}
\label{prop:tree_proximal}
When $\alpha = \frac{r^2}{4B}$, $f_t(x_t) \geq f_t(\tilde x_t)$.
\end{proposition}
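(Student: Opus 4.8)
The plan is to compare the proximal point $\tilde x_t$ with the true iterate $x_t$ by exhibiting an explicit feasible point of the constrained optimization \eqref{constrained_opt} whose $f_t$-value lower bounds $f_t(\tilde x_t)$; then \eqref{equ:cond} transfers this bound to $x_t$. First I would observe that by definition of $x_t$ in \eqref{equ:cond}, it suffices to find some $x$ with $\|x - x_{t-1}\| \le r$ and $f_t(x) \ge f_t(\tilde x_t)$. The natural candidate is $\tilde x_t$ itself, \emph{provided} $\|\tilde x_t - x_{t-1}\| \le r$; so the whole proof reduces to showing that the proximal step cannot move too far, namely $\|\tilde x_t - x_{t-1}\| \le r$, under the choice $\alpha = r^2/(4B)$.

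To bound the proximal displacement, I would use the optimality of $\tilde x_t$ in \eqref{proximal}: since $\tilde x_t$ maximizes $f_t(x) - \frac{1}{2\alpha}\|x - x_{t-1}\|^2$ over $\mathcal{D}$ and $x_{t-1} \in \mathcal{D}$ is feasible, we get
\begin{equation}
    f_t(\tilde x_t) - \frac{1}{2\alpha}\|\tilde x_t - x_{t-1}\|^2 \ge f_t(x_{t-1}) - 0 = f_t(x_{t-1}).
\end{equation}
Rearranging, $\frac{1}{2\alpha}\|\tilde x_t - x_{t-1}\|^2 \le f_t(\tilde x_t) - f_t(x_{t-1}) \le 2B$, using the uniform bound $|f_t(z)| \le B$ on $\mathcal{D}$ from Assumption~\ref{asmp:ft_cvx}. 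Hence $\|\tilde x_t - x_{t-1}\|^2 \le 4\alpha B$, and plugging in $\alpha = r^2/(4B)$ gives $\|\tilde x_t - x_{t-1}\|^2 \le r^2$, i.e. $\|\tilde x_t - x_{t-1}\| \le r$. (I should double check that $\tilde x_t$ being in the interior of $\mathcal{D}$, guaranteed by Assumption~\ref{asmp:ft_cvx}, is what lets me treat \eqref{proximal} as essentially unconstrained where needed, though for this particular displacement bound only the feasibility of $x_{t-1}$ is used.)

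Then I would conclude: $\tilde x_t$ is feasible for \eqref{constrained_opt}, so $\max\{f_t(x) : \|x - x_{t-1}\| \le r\} \ge f_t(\tilde x_t)$, and combining with \eqref{equ:cond} yields $f_t(x_t) \ge \max\{f_t(x) : \|x - x_{t-1}\| \le r\} \ge f_t(\tilde x_t)$, which is the claim.

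The main obstacle here is essentially bookkeeping rather than a genuine difficulty: one must be careful that the inequality chain uses the correct form of the $f_t$ bound (the difference $f_t(\tilde x_t) - f_t(x_{t-1})$ is at most $2B$, not $B$) and that the constraint set in \eqref{proximal} over $\mathcal{D}$ does not interfere — which is fine because $x_{t-1} \in \mathcal{D}$ is used only as a feasible comparison point, not as the maximizer. A secondary subtlety worth a sentence is confirming that $r \le R$ (or at least that the $r$-ball around $x_{t-1}$ meaningfully intersects $\mathcal{D}$) so that Condition~\ref{cond:local_argmax} / \eqref{equ:cond} is not vacuous; but this is inherited from the setup rather than something to be proved here.
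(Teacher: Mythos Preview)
Your proof is correct and follows essentially the same approach as the paper: both reduce the claim to showing $\|\tilde x_t - x_{t-1}\| \le r$ by comparing the proximal objective at $\tilde x_t$ (or at a generic point outside the $r$-ball) with its value $f_t(x_{t-1})$ at $x_{t-1}$, using the bound $|f_t|\le B$ and the choice $\alpha = r^2/(4B)$. The paper phrases this as an exclusion argument (any $x$ outside the ball has proximal objective $\le -B \le f_t(x_{t-1})$), while you phrase it as a direct displacement bound via optimality of $\tilde x_t$; the content is the same.
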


\begin{proof}
It's obvious that $f_t(x_t) \geq f_t(\tilde x_t)$ holds when $\tilde x_t \in \mathcal{B} \rbra{x_{t-1}, r}:= \{x: \|x - x_{t-1}\|_2 \leq r\}$. If $\alpha = \frac{r^2}{4B}$, then for $\forall x \not \in \mathcal{B} \rbra{x_{t-1}, r}$, $f_t(x) - \frac{1}{2 \alpha} \|x - x_{t-1}\|^2 \leq f_t(x) - \frac{r^2}{2 \alpha} = f_t(x) - 2B \leq -B \leq f_t(x_{t-1})$, thus $\tilde x_t \in \mathcal{B} \rbra{x_{t-1}, r}$.
\end{proof}

\subsection{Regret Decomposition}
Recall from the preliminaries of GP, $\{f_t\}$ is a sequence of online learning models of the ground truth $f^{\star}$ s.t. 
\begin{equation*}
    \mathbb{E} [\max_{x \in \mathcal{D}} f_t (x)] = \mathbb{E} [\max_{x \in \mathcal{D}} f^{\star} (x)],
\end{equation*}
then the total regret of $\{x_t\}$ is decomposed as
\begin{align*}
    &\mathbb{E} \left[\sum_{t=1}^{T} \max_{x \in \mathcal{D}}  f^{\star} (x) - f^{\star} (x_t) \right]\\
    =& \mathbb{E} \left[\sum_{t=1}^{T} \max_{x \in \mathcal{D}}  f_t (x) - f_t (x_t) \right] + \mathbb{E} \left[\sum_{t=1}^{T} f_t (x_t) - f^{\star} (x_t) \right].
\end{align*}

\paragraph{Re-indexing $\{x_{t}\}_{t=1}^{T}$ (Phase Index).} Suppose $f_t$ is updated at time steps $\{t_k\}_{k=0}^{K}$, where $t_1 = 1, t_2 = 1 + n_1, \cdots, t_K= 1+\sum_{k=1}^{K-1} n_k, T = \sum_{k=1}^{K} n_k$ with $n_k, K\in \mathbb{N}^{\star}$. It indicates that the whole timeline $t \in [T]$ is partitioned into $K$ phases of length $\{n_k\}_{k=1}^{K}$. The $k$-th phase is governed by one unchanging function $f_{t_k}$, i.e. $\forall t \in [T], k \in [K]$ s.t. $t_k \leq t < t_{k+1}$, we have $f_t = f_{t_k}$. To reflect which phase $x_t$ is in, we assign the ``phase index" $x_{k,i}, i \in [n_k]$ to $x_t, t \in [t_k, t_{k+1})$.

\subsection{Regression Error $\mathbb{E} \left[\sum_{t=1}^{T} f_t (x_t) - f^{\star} (x_t) \right]$}
\begin{lemma}
\label{lmm:rare_switch}
    Under Assumption \ref{asmp:gp_f} and \ref{asmp:noisy_fdb}, the total number of updates satifies $K \leq O\rbra{\gamma_T}$ and
    \begin{equation*}
        \mathbb{E} \sbra{\sum_{t=1}^{T} f_t(x_t) - f^{\star}(x_t)} = O \rbra{\beta_{T} \sqrt{\lambda T \gamma_{T}}},
    \end{equation*}
    where $\beta_T = O\left( \mathbb{E} \sbra{\|f^{\star}\|_{k}}+\sqrt{d \ln T} + \sqrt{\gamma_{T-1}}\right)$ and $\gamma_T$ is the maximum information gain.
\end{lemma}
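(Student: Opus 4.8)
The plan is to establish Lemma~\ref{lmm:rare_switch} in two parts: first bound the number of posterior updates $K$, then bound the accumulated regression error across phases. For the first part, I would use the standard ``elliptical potential'' / rarely-switching argument: each update is triggered exactly when $\det(V_t)$ doubles relative to its value at the previous update, so after $K$ updates $\det(V_T) \geq 2^{K-1} \det(\lambda I) = 2^{K-1}\lambda^t$ (roughly), while on the other hand $\log\det(I + \lambda^{-1}K_T) \leq 2\gamma_T$ by the definition of maximum information gain. Comparing these two gives $K \leq O(\gamma_T)$. This is essentially the argument in \cite{abbasi2011improved, lansdell2019rarely} adapted to the kernelized setting.

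For the regression error, the key is that within each phase $k$ the model $f_{t_k}$ is fixed, and $x_t$ for $t$ in that phase were selected using this fixed model. I would invoke the standard GP/kernel bandit concentration bound: conditioned on the history at the start of phase $k$, with high probability (or in expectation, since this is a Bayesian bound) we have $|f_{t_k}(x) - m_{t_k}(x)| \leq \beta_T \sigma_{t_k}(x)$ and $|f^\star(x) - m_{t_k}(x)| \leq \beta_T \sigma_{t_k}(x)$ for all relevant $x$, where $\sigma_{t_k}$ is the posterior standard deviation and $\beta_T = O(\mathbb{E}[\|f^\star\|_k] + \sqrt{d\ln T} + \sqrt{\gamma_{T-1}})$ — here the discretization of $\mathcal{D}_t$ with the $1/t^2$ accuracy controls the union-bound/covering term, following \cite{chowdhury2017kernelized}. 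Hence $f_{t_k}(x_t) - f^\star(x_t) \leq 2\beta_T \sigma_{t_k}(x_t)$. Since $\sigma_{t_k}(x_t) \leq \sigma_{t-1}(x_t)$ for $t$ in phase $k$ (adding data only shrinks variance, and the posterior used for selection is the one frozen at $t_k$), and using that $\lambda = 1 + 1/T > 1$, the total sum is bounded by $2\beta_T \sum_{t=1}^T \sigma_{t-1}(x_t)$. Then Cauchy--Schwarz plus the information-gain bound $\sum_{t=1}^T \sigma_{t-1}^2(x_t) = O(\lambda \gamma_T)$ yields $\sum_{t=1}^T \sigma_{t-1}(x_t) \leq \sqrt{T \cdot O(\lambda\gamma_T)}$, giving the claimed $O(\beta_T \sqrt{\lambda T \gamma_T})$.

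The main obstacle I anticipate is handling the mismatch between the model $f_{t_k}$ that was frozen at the start of a phase and the shrinking posterior variance as new data accumulates \emph{within} the phase. One must be careful that the concentration inequality $|f_{t_k}(x) - f^\star(x)| \leq 2\beta_T \sigma_{t_k}(x)$ uses the variance at the \emph{beginning} of the phase, but we want to charge the per-step regret against $\sigma_{t-1}(x_t)$ at the \emph{current} step. The rarely-switching condition is precisely what reconciles this: because $\det(V_t)/\det(V_{t_k}) \leq 2$ throughout a phase, the posterior covariance cannot shrink by more than a bounded factor within a phase, so $\sigma_{t_k}(x) \leq \sqrt{2}\,\sigma_{t-1}(x)$ (or a similar constant) for all $x$ — this is the standard lemma relating $\sigma$ at two times whose information matrices are within a constant factor. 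I would spell this out carefully, as it is the technical heart that lets the frozen-model argument go through without losing more than constants. Everything else — the Cauchy--Schwarz step, the $\sum \sigma^2 = O(\gamma_T)$ bound, the covering argument for the discretization — is routine and can cite \cite{srinivas2009gaussian, chowdhury2017kernelized}.
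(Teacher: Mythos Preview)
Your proposal follows essentially the same route as the paper: the determinant-doubling argument for $K = O(\gamma_T)$, GP concentration on the discretization $\mathcal{D}_t$ (citing \cite{chowdhury2017kernelized}) to get $|f_{t_k}(x_t) - f^\star(x_t)| \lesssim \beta_T\,\sigma_{t_k}(x_t)$, and then the rarely-switching determinant ratio to transfer $\sigma_{t_k}$ to the current-step variance before applying Cauchy--Schwarz and the information-gain sum. One slip to fix: in your second paragraph you assert $\sigma_{t_k}(x_t) \leq \sigma_{t-1}(x_t)$ ``since adding data only shrinks variance,'' but monotonicity goes the \emph{other} way (since $t_k \leq t-1$, more data at time $t-1$ means $\sigma_{t-1} \leq \sigma_{t_k}$), so this step as written does not give the bound you want; you then correctly identify exactly this issue in your final paragraph and supply the right inequality $\sigma_{t_k}(x) \leq \sqrt{2}\,\sigma_{t-1}(x)$ via the $\det$-ratio condition, which is precisely what the paper proves (they write it as $\sigma_{t_k}^2(x_{k,i})/\sigma_t^2(x_t) \leq 2$ by bounding the ratio of determinants of $\lambda^{-1}K + I$).
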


\subsection{Optimization Error $\mathbb{E} \left[\sum_{t=1}^{T} \max_{x \in \mathcal{D}}  f_t (x) - f_t (x_t) \right]$}

To stabilize evolution, we keep using a same model $f_t$ throughout multiple rounds. Re-indexing $\{x_t\}$ to reflect the update of $f_t$ , we have
\begin{align}
    & \mathbb{E} \left[\sum_{t=1}^{T} \max_{x \in \mathcal{D}}  f_t (x) - f_t (x_t) \right] \notag \\=& \mathbb{E} \left[\sum_{k=1}^{K} \sum_{i=1}^{n_k} \max_{x \in \mathcal{D}}  f_{t_k} (x) - f_{t_k} (x_{k,i}) \right].
\end{align}
Here $\{x_{k,i}\}_{i=1}^{n_k}$ is $n_k$-long segment of the whole trajectory seeking to maximize $f_{t_k}$ in an evolutionary fashion. As shown in Proposition \ref{prop:tree_proximal}, for every iterate $x_{t}$, there exists a proximal point iterate $\tilde x_{t}$ whose $f_t$ value lower bounds $f_t(x_t)$, i.e. $\forall k \in [K]$,
\begin{align*}
    \tilde x_{k,1} := x_{k,1}, \quad &\tilde x_{k,i+1} := \mathcal{P}_{\alpha, f_{t_k}} (x_{k,i}), \forall i \in [n_k-1];\\
    f_{t_k} (\tilde x_{k,i}) &\leq f_{t_k} (x_{k,i}).
\end{align*}

\begin{lemma}
\label{lmm:prox_opt}
    Under Assumption \ref{asmp:ft_cvx} and Condition \ref{cond:local_argmax} and with $\alpha = \frac{r^2}{4B}$, 
    \begin{align}
        \sum_{k=1}^{K} \sum_{i=1}^{n_k} \max_{x \in \mathcal{D}}  f_{t_k} (x) - f_{t_k} (\tilde x_{k,i}) &\leq O \rbra {K \rbra{B+ \frac{R^2}{\alpha}}} \notag \\
        &= O \rbra {K B \rbra{1+ 4 \frac{R^2}{r^2}}}.
    \end{align}
    where the RHS also upper bounds $\mathbb{E} \left[\sum_{t=1}^{T} \max_{x \in \mathcal{D}}  f_t (x) - f_t (x_t) \right]$ and by the assumption that the embedding map $\phi$ is Lipschitz, $R = LN$.
\end{lemma}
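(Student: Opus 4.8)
The plan is to analyze each phase $k$ separately, treating the trajectory $\{\tilde x_{k,i}\}_{i=1}^{n_k}$ as the iterates of a projected/proximal gradient-style ascent on the fixed, strongly-well-behaved function $f_{t_k}$, and to show that the per-phase regret $\sum_{i=1}^{n_k} \big(\max_{x\in\mathcal{D}} f_{t_k}(x) - f_{t_k}(\tilde x_{k,i})\big)$ is bounded by a constant $O(B + R^2/\alpha)$ independent of $n_k$. Summing over the $K$ phases then gives $O\big(K(B+R^2/\alpha)\big)$, and substituting $\alpha = r^2/(4B)$ yields $O\big(KB(1+4R^2/r^2)\big)$; Proposition 4.9 transfers this bound from the $\tilde x$'s to the actual iterates $x_t$ (since $f_{t_k}(\tilde x_{k,i}) \le f_{t_k}(x_{k,i})$), and the Lipschitz assumption on $\phi$ gives $R = L_\phi N$.

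The core is the per-phase claim. Here I would invoke the standard convergence theory for the proximal point method on a concave function. Recall $\tilde x_{k,i+1} = \mathcal{P}_{\alpha, f_{t_k}}(x_{k,i})$ and, by the optimality condition quoted in the excerpt, $\tilde x_{k,i+1} - x_{k,i} \in \alpha\,\nabla f_{t_k}(\tilde x_{k,i+1})$. The key monotonicity input is that $f_{t_k}(\tilde x_{k,i+1}) \ge f_{t_k}(x_{k,i}) \ge f_{t_k}(\tilde x_{k,i})$, so the sequence $f_{t_k}(\tilde x_{k,i})$ is nondecreasing; the first inequality holds because $x_{k,i}$ is feasible for the proximal subproblem, the second by Proposition 4.9. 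From concavity of $f_{t_k}$, for any maximizer $x^\star_k \in \arg\max_{\mathcal{D}} f_{t_k}$ one gets the usual three-point inequality $f_{t_k}(x^\star_k) - f_{t_k}(\tilde x_{k,i+1}) \le \frac{1}{2\alpha}\big(\|x_{k,i} - x^\star_k\|^2 - \|\tilde x_{k,i+1} - x^\star_k\|^2\big)$ (possibly with a cross term $\frac{1}{2\alpha}\|\tilde x_{k,i+1}-x_{k,i}\|^2$ that can be absorbed or is nonnegative on the correct side). The delicate point is that $x_{k,i}$ is \emph{not} $\tilde x_{k,i}$ — the tree search only guarantees $f_{t_k}(x_{k,i}) \ge f_{t_k}(\tilde x_{k,i})$, not closeness in norm — so the telescoping over $i$ is not immediate. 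I would handle this by telescoping only over indices where genuine progress is made and bounding the ``reset'' caused by replacing $\tilde x_{k,i}$ by $x_{k,i}$ using the global bound $|f_{t_k}| \le B$ on $\mathcal{D}$ together with $\|x - x_{k,i}\|\le r$ and $\operatorname{diam}(\mathcal{D}) \le 2R$: each phase contributes at most one ``large jump'' cost of order $B + R^2/\alpha$, and within a phase the $f$-values are monotone so only finitely many steps can be far from optimal.

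Concretely, I expect the cleanest route is: (i) since $f_{t_k}(\tilde x_{k,i})$ is monotone nondecreasing and bounded in $[-B,B]$, and each proximal step that does not already sit at the constrained optimum increases $f_{t_k}$ by at least a fixed amount related to $\alpha$ and the gradient — or, more robustly, use that $\max_{\mathcal D} f_{t_k} - f_{t_k}(\tilde x_{k,i})$ is itself nonincreasing in $i$ — so the sum $\sum_{i=1}^{n_k}\big(\max_{\mathcal D} f_{t_k} - f_{t_k}(\tilde x_{k,i})\big)$ is dominated by its behavior for small $i$; (ii) bound the ``few'' early terms by a telescoping argument $\sum_i \big(\max f_{t_k} - f_{t_k}(\tilde x_{k,i})\big) \le \frac{1}{2\alpha}\|\tilde x_{k,1} - x^\star_k\|^2 \le \frac{(2R)^2}{2\alpha} = \frac{2R^2}{\alpha}$, which is the standard $O(1/(\alpha n))$-rate integrated, plus an $O(B)$ slack for the single index where the $x$-vs-$\tilde x$ discrepancy matters. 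The main obstacle, and where care is genuinely needed, is precisely making the telescoping rigorous despite the mismatch between $x_{k,i}$ and $\tilde x_{k,i}$: one must verify that $\|\tilde x_{k,i+1} - x^\star_k\|^2 \le \|\tilde x_{k,i} - x^\star_k\|^2 + (\text{controlled error})$, using $\|x_{k,i} - x_{k,i-1}\| \le r$, concavity, and the choice $\alpha = r^2/(4B)$ from Proposition 4.9 to keep the proximal point inside $\mathcal{B}(x_{k,i-1}, r)$. Once that contraction-up-to-error is in hand, summing the errors over a phase gives the stated $O(B + R^2/\alpha)$ per phase, completing the proof.
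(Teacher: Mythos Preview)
Your high-level plan is exactly the paper's: bound the optimization error phase by phase via a proximal-point argument, separate out the first term of each phase by the crude bound $2B$, bound the remaining terms by $\frac{1}{2\alpha}\|x_k^\star-\tilde x_{k,1}\|^2\le \frac{2R^2}{\alpha}$, sum over the $K$ phases, and substitute $\alpha=r^2/(4B)$. The paper packages the core step as a separate sublemma: for a concave $f$ and a \emph{pure} proximal sequence $z_{t+1}=\mathcal{P}_{\alpha,f}(z_t)$, the three-point identity gives $f(z^\star)-f(z_{t+1})\le\frac{1}{2\alpha}\bigl(\|z^\star-z_t\|^2-\|z^\star-z_{t+1}\|^2\bigr)$, which telescopes to $\sum_t f(z^\star)-f(z_t)\le\frac{1}{2\alpha}\|z_1-z^\star\|^2$. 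It then simply invokes this sublemma per phase with $z_1=\tilde x_{k,1}$, obtaining
\[
\sum_{i=1}^{n_k}\bigl[f_{t_k}(x_k^\star)-f_{t_k}(\tilde x_{k,i})\bigr]\;\le\;\bigl[f_{t_k}(x_k^\star)-f_{t_k}(\tilde x_{k,1})\bigr]+\frac{1}{2\alpha}\|x_k^\star-\tilde x_{k,1}\|^2\;\le\;2B+\frac{2R^2}{\alpha},
\]
with no further discussion.

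You correctly flag the one genuine subtlety: since $\tilde x_{k,i+1}=\mathcal{P}_{\alpha,f_{t_k}}(x_{k,i})$ rather than $\mathcal{P}_{\alpha,f_{t_k}}(\tilde x_{k,i})$, the sequence $\{\tilde x_{k,i}\}$ is not a pure proximal iteration for $i\ge 3$, and the telescoping sum involves $\|x_k^\star-x_{k,i}\|^2-\|x_k^\star-\tilde x_{k,i+1}\|^2$, which does not collapse. The paper's proof does not address this mismatch at all; it applies the sublemma as if $\{\tilde x_{k,i}\}$ were pure proximal iterates. So your instinct that ``care is genuinely needed'' here is right, but be aware that the paper does not supply that care either. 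Your proposed fixes (monotonicity of $f_{t_k}(\tilde x_{k,i})$ plus a ``contraction up to controlled error'') are in the right spirit but remain sketches: the needed inequality $\|x_k^\star-x_{k,i}\|\le\|x_k^\star-\tilde x_{k,i}\|+(\text{small})$ does not follow from $f_{t_k}(x_{k,i})\ge f_{t_k}(\tilde x_{k,i})$ alone for general concave $f_{t_k}$, and Condition~\ref{cond:local_argmax} does not assert $\|x_{k,i}-x_{k,i-1}\|\le r$. In short, your plan and the paper's coincide; the gap you worry about is real and is simply elided in the paper's argument.
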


\subsection{Final Regret Bound}
Combining Lemma \ref{lmm:rare_switch} and \ref{lmm:prox_opt}, the proof of Theorem \ref{thm:main} completes.

\section{Omitted Proofs in Appendix \ref{appx_proof}}
\subsection{Proof of Lemma \ref{lmm:rare_switch}}
\begin{proof}
We bring the discretization $\mathcal{D}_t$ in to decompose $\mathbb{E} \left[\sum_{t=1}^{T} f_t (x_t) - f^{\star} (x_t) \right]$. Given $[x_t]_t \in \mathcal{D}_t$ is the nearest point to $x_t$, then
\begin{align*}
    &\mathbb{E} \left[\sum_{t=1}^{T} f_t (x_t) - f^{\star} (x_t) \right] \\
    \leq &\mathbb{E} \left[\sum_{t=1}^{T} f_t (x_t) - f_t ([x_t]_t) \right] + \mathbb{E} \left[\sum_{t=1}^{T} f_t ([x_t]_t) - f^{\star} ([x_t]_t) \right] \\
    &+ \mathbb{E} \left[\sum_{t=1}^{T} f^{\star} ([x_t]_t) - f^{\star} (x_t) \right] \\
    \leq &\sum_{t=1}^{T} \mathbb{E} \left[ f_t ([x_t]_t) - f^{\star} ([x_t]_t) \right] + O \left(\sum_{t=1}^{T} \frac{1}{t^2}\right),
\end{align*}
where we utilize the properties that $|f^{\star}(x) - f^{\star}([x]_t)| \leq \frac{1}{t^2}$ and $f_t$ is $L_f$ Lipschitz continuous in $\mathcal{D}$, thus there exists some constant $C_0$ s.t. $|f^{\star}(x) - f^{\star}([x]_t)| \leq \frac{C_0}{t^2}$.

By our updating rule for the posterior
\begin{align}
    2^{K} \leq & \frac{\operatorname{det}\rbra{\lambda I + K_{t_1 - 1}}}{\operatorname{det}\rbra{\lambda I}} \cdot \frac{\operatorname{det}\rbra{\lambda I + K_{t_2 - 1}}}{\operatorname{det}\rbra{\lambda I + K_{t_1 - 1}}} \cdots \notag \\
    & \frac{\operatorname{det}\rbra{\lambda I + K_{t_K - 1}}}{\operatorname{det}\rbra{\lambda I + K_{t_{K-1} - 1}}}
    \leq \operatorname{det}\rbra{ I + \lambda^{-1} K_{T}}.
\end{align}
Therefore,
\begin{equation}
    K \leq \ln \rbra{\operatorname{det} \rbra{I + \lambda ^{-1} K_{T}}} \leq 2\gamma_T.
\end{equation}

Denote by $\|f^{\star}\|_{k}$ the norm of function $f^{\star}$ in the RKHS associated with kernel $k(\cdot, \cdot)$.
From the Theorem 2 in \cite{chowdhury2017kernelized}, then with probability at least $1-\delta$, the following holds for all $x \in \mathbb{R}^{d}$ and $t \geq 1$ :
\begin{align}
\label{equ_conc_mt_f} 
    & \left|m_{t}(x)-f^{\star}(x)\right| \notag \\
    \leq & \left(\|f^{\star}\|_{k} + \sqrt{\lambda} \sqrt{2 \rbra{\ln \frac{1}{\delta}}  + \ln \rbra{\operatorname{det} \rbra{\lambda I+K_{t-1}}} } \right) \sigma_{t}(x)\\
    \leq & \left(\|f^{\star}\|_{k} + \sqrt{\lambda} \sqrt{2 \rbra{\ln \frac{1}{\delta}}  + (t-1)\ln \lambda + \gamma_{t-1}} \right) \sigma_{t}(x),
\end{align}
where $m_{t}(x), \sigma_{t}^2(x)$ are the mean and variance of the posterior distribution of $f^{\star}(x)$ defined as in (\ref{equ_post_mean}) and (\ref{equ_post_cov}). 

Also, according to the marginal sampling distribution of $f_{t}(x)$, which is $\mathcal{N} \rbra{m_{t}(x), \sigma_{t}(x)}$, with probability $1-\delta$, it holds for all $\{[x_t]_{t}\}_{t=1}^{T}$ (or $\{\{[x_{k,i}]_{t_k}\}_{i=1}^{n_k}\}_{k=1}^{K}$ indexed by phases) that 
\begin{align*}
    &\abs{f_{t_k} \rbra{[x_{k,i}]_{t_k}} - m_{t_k}([x_{k,i}]_{t_k})} \\
    \leq & \sigma_{t_k}\rbra{[x_{k,i}]_{t_k}} \sqrt{\ln \rbra{\frac{T |D_T|}{\delta}}}\\
    \leq &\sigma_{t_k}\rbra{[x_{k,i}]_{t_k}} \left(\sqrt{d \ln \rbra{d L_{\phi}N \|f^{\star}\|_k M T^2}} + \sqrt{\ln \rbra{\frac{T}{\delta}}}\right).
\end{align*}

Let $\beta_T = \|f^{\star}\|_{k} + \sqrt{\lambda} \sqrt{2 \rbra{\ln \frac{1}{\delta}}  + (T-1)\ln \lambda + \gamma_{T-1}} + \sqrt{d \ln \rbra{d L_{\phi}N \|f^{\star}\|_k M T^2}} + \sqrt{\ln \rbra{\frac{T}{\delta}}}$, then
\begin{align*}
    &\sum_{k=1}^{K} \sum_{i=1}^{n_k} \abs{f_{t_k}(x_{k, i}) - f^{\star}(x_{k, i})} \\
    = & O\left(\sum_{t=1}^{T}  f_t ([x_t]_t) - f^{\star} ([x_t]_t) \right)\\
    \leq &  \beta_{T} \sum_{k=1}^{K} \sum_{i=1}^{n_k} \sigma_{t_k}\rbra{x_{k,i}}\\
    \leq & \beta_{T} \sqrt{T \lambda \sum_{k=1}^{K} \sum_{i=1}^{n_k}  \lambda^{-1} \sigma^2_{t_k}\rbra{x_{k,i}}}\\
    \leq & \beta_{T} \sqrt{4 T \lambda \sum_{k=1}^{K} \sum_{i=1}^{n_k} \frac{1}{2}\ln \rbra{1 + \lambda^{-1} \sigma^2_{t_k}\rbra{x_{k,i}}}} .
\end{align*}

According to Lemma 3 in \cite{chowdhury2017kernelized}, we have
\begin{equation}
    \frac{1}{2} \sum_{t=1}^T \ln \left(1+\lambda^{-1} \sigma_{t}^2\left(x_t\right)\right) \leq  \gamma_t.
\end{equation}

Suppose $x_{k,i}$ corresponds to $x_{t}$ under  the phase indexing, consider the following ratio
\begin{equation*}
    \frac{\sigma^2_{t_k}\rbra{x_{k,i}}}{\sigma^2_{t}\left(x_t\right)},
\end{equation*}
where $t_k \leq t-1$ is the time step where the posterior Gaussian process was updated for the $k$-th time.
Recall from (\ref{equ_post_cov}),
\begin{equation*}
    \sigma_t^2(x) = k_t \left(x, x\right) =k\left(x, x\right)-\mathbf{k_{t-1}}(x)^{\top} \left(K_{t-1}+ \lambda I \right)^{-1}\mathbf{k_{t-1}}(x),
\end{equation*}
where $k(x, x^{\prime}) = \psi(x)^{\top} \psi(x^{\prime})$ with $\psi(x)$ being the possibly infinite dimensional feature of $x$. Hence by defining the $t \times \infty$ matrix $\Psi_t = \{\psi(x_1), \cdots, \psi(x_t)\}^{\top}$, we have $K_{t-1} = \Psi_{t-1} \Psi_{t-1}^{\top}$ and
\begin{align*}
    \sigma_t^2(x) &= \psi(x)^{\top} \psi(x) -\psi(x)^{\top} \Psi_{t-1}^{\top}\left(K_{t-1}+ \lambda I \right)^{-1} \Psi_{t-1} \psi(x)\\
    &= \psi(x)^{\top} \rbra{I_{t-1} - \Psi_{t-1}^{\top}\left(K_{t-1}+ \lambda I \right)^{-1} \Psi_{t-1}} \psi(x)\\
    &= \lambda \psi(x)^T\left( \Psi_{t-1}^T \Psi_{t-1}+ \lambda I\right)^{-1} \psi(x).
\end{align*}

Therefore,
\begin{align}
     \frac{\sigma^2_{t_k}\rbra{x_{k,i}}}{\sigma^2_{t}\left(x_t\right)} &\leq  \frac{\operatorname{det} \rbra{\lambda^{-1} \Psi_{t-1}^T \Psi_{t-1}+  I}}{\operatorname{det} \rbra{\lambda^{-1} \Psi_{t_k-1}^T \Psi_{t_k-1}+ I}} \notag\\ &= \frac{\operatorname{det} \rbra{\lambda^{-1} K_{t-1}+  I}}{\operatorname{det} \rbra{\lambda^{-1} K_{t_k-1}+ I}} \notag \\ &\leq \frac{\operatorname{det} \rbra{\lambda^{-1} K_{t_{k+1}-1}+  I}}{\operatorname{det} \rbra{\lambda^{-1} K_{t_k-1}+ I}} \leq 2,
\end{align}
where we used $t_k \leq t \leq t_{k+1}$ and we end up with
\begin{equation}
\label{equ_snb}
     \sum_{k=1}^{K} \sum_{i=1}^{n_k} \abs{f_{t_k}(x_{k, i}) - f^{\star}(x_{k, i})} \leq O \rbra{\beta_{T} \sqrt{\lambda T \gamma_{T}}}.
\end{equation}
holds for any realization of $f^{\star}$ with probability $\frac{1}{\delta}$ when $\beta_T = \|f^{\star}\|_{k} + \sqrt{\lambda} \sqrt{2 \rbra{\ln \frac{1}{\delta}}  + (T-1)\ln \lambda + \gamma_{T-1}} +  \sqrt{\ln \rbra{\frac{T}{\delta}}}$. 

Taking $\delta = \frac{1}{T}$ and $\lambda = 1+ \frac{1}{T}$ and taking expectation of \ref{equ_snb} on both side, we have
\begin{equation*}
    \mathbb{E} \sbra{\sum_{k=1}^{K} \sum_{i=1}^{n_k} \abs{f_{t_k}(x_{k, i}) - f^{\star}(x_{k, i})}} \leq O \rbra{\beta_{T} \sqrt{\lambda T \gamma_{T}}}
\end{equation*}
with $\beta_T = \mathbb{E} \sbra{\|f^{\star}\|_{k}} + \sqrt{\lambda} \sqrt{2 \ln T + (T-1)\ln \lambda + \gamma_{T-1}} +  \sqrt{2 \ln T} + \sqrt{d \ln \rbra{d L_{\phi}N \|f^{\star}\|_k M T^2}}\leq \mathbb{E} \sbra{\|f^{\star}\|_{k}} + \sqrt{d} \cdot \mathbb{E} \sbra{\sqrt{\ln \|f^{\star}\|_k}}+ \mathbb{E} \sbra{\sqrt{d \ln (dL_{\phi}NM)}} + \sqrt{d \ln T} + \sqrt{\gamma_{T-1}} = O\left( \mathbb{E} \sbra{\|f^{\star}\|_{k}}+\sqrt{d \ln T} + \sqrt{\gamma_{T-1}}\right)$ .

\end{proof}

\subsection{Proof of Lemma \ref{lmm:prox_opt}}
\begin{lemma}
\label{lmm:prox}
    For a concave $f$ with $z^{\star}$ being one of its maxima, a sequence of iterates ${z_t}$ following the proximal point update 
    \begin{equation}
        z_{t+1} = \mathcal{P}_{\alpha, f} (z_{t})
    \end{equation}
    satisfies
    \begin{equation}
        \sum_{t} f (z^{\star}) - f (z_t)  \leq \frac{1}{2 \alpha} \| z_1 - z^{\star}\|^2.
    \end{equation} 
\end{lemma}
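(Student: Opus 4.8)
The plan is to run the textbook convergence analysis of the proximal point method, rephrased for the concave-maximization convention used here. The crux is a one-step inequality $f(z^\star) - f(z_{t+1}) \le \frac{1}{2\alpha}\big(\|z^\star - z_t\|^2 - \|z^\star - z_{t+1}\|^2\big)$; once this is in hand, summing over $t$ and telescoping gives the stated bound, with the terminal term $-\frac{1}{2\alpha}\|z^\star - z_{T+1}\|^2$ simply discarded.

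First I would extract the first-order optimality condition of the proximal update. The objective $x \mapsto f(x) - \frac{1}{2\alpha}\|x - z_t\|^2$ is strongly concave, and by Assumption \ref{asmp:ft_cvx} together with the discussion following (\ref{proximal}) its maximizer $z_{t+1}$ lies in the interior of $\mathcal{D}$, so the constraint is inactive and stationarity reads $\tfrac{1}{\alpha}(z_{t+1} - z_t) \in \partial f(z_{t+1})$, where $\partial f$ is the superdifferential; call this supergradient $g_{t+1}$. Applying the concavity (supergradient) inequality at $z_{t+1}$ evaluated at the maximizer $z^\star$ gives $f(z^\star) \le f(z_{t+1}) + \langle g_{t+1},\, z^\star - z_{t+1}\rangle$, i.e. $f(z^\star) - f(z_{t+1}) \le \tfrac{1}{\alpha}\langle z_{t+1} - z_t,\, z^\star - z_{t+1}\rangle$. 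Then I would rewrite the inner product with the three-point identity $\langle z_{t+1} - z_t,\, z^\star - z_{t+1}\rangle = \tfrac12\big(\|z^\star - z_t\|^2 - \|z^\star - z_{t+1}\|^2 - \|z_{t+1} - z_t\|^2\big)$, discard the non-positive term $-\tfrac12\|z_{t+1}-z_t\|^2$, and arrive at the one-step inequality. Summing the recursion and telescoping then yields $\sum_t \big(f(z^\star) - f(z_t)\big) \le \tfrac{1}{2\alpha}\|z_1 - z^\star\|^2$, where the sum ranges over the proximal iterates $z_2, z_3, \dots$ generated by the recursion; in the application inside the proof of Lemma \ref{lmm:prox_opt}, the initial point $x_{k,1}$ of each phase is handled separately via $|f_{t_k}| \le B$, which is the source of the extra $O(KB)$ term there.

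I do not anticipate a real obstacle; the only point demanding care is the passage from the $\mathcal{D}$-constrained $\arg\max$ in (\ref{proximal}) to an unconstrained stationarity condition. This is exactly why Assumption \ref{asmp:ft_cvx} requires each $f_t$ to be locally concave and to attain its maximum in the interior of $\mathcal{D}$: together with the quadratic penalty — which only pulls the maximizer toward $z_t \in \mathcal{D}$ — this forces $z_{t+1}$ into the interior, so the projection implicit in $\mathcal{P}_{\alpha,f}$ never binds and the clean optimality condition used above is legitimate. Should one wish to retain an active constraint, the identical telescoping works after replacing the supergradient inequality by the variational inequality $\langle \tfrac1\alpha(z_{t+1}-z_t),\, z^\star - z_{t+1}\rangle \ge 0$ for the constrained maximizer, at the price of a marginally longer bookkeeping step; I would instead just invoke the interior-maximizer assumption and avoid that.
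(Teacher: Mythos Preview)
Your proposal is correct and follows essentially the same route as the paper: derive the optimality condition $\tfrac{1}{\alpha}(z_{t+1}-z_t)\in\partial f(z_{t+1})$, apply the concave (supergradient) inequality at $z_{t+1}$ with $z=z^\star$, rewrite the inner product via the three-point identity, drop $-\tfrac{1}{2\alpha}\|z_{t+1}-z_t\|^2$, and telescope. Your remarks about the constraint being inactive under Assumption~\ref{asmp:ft_cvx} and about the separate handling of the initial phase point in Lemma~\ref{lmm:prox_opt} are also accurate and match the paper's usage.
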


\begin{proof}
Given $z_{t+1} = \mathcal{P}_{\alpha, f} (z_{t})$, we have
\begin{equation}
    z_{t+1} - z_{t} \in \alpha \cdot \partial f(z_{t+1}).
\end{equation}
For any $z$, by the concavity of $f$
\begin{align*}
    f(z) -  f(z_{t+1}) &\leq  \langle \partial f(z_{t+1}), z - z_{t+1} \rangle\\
    &= \frac{1}{\alpha} \langle z_{t+1} - z_{t}, z - z_{t+1} \rangle\\
    &= \frac{1}{2 \alpha} \left[  \|z - z_{t}\|^2 - \| z_{t+1} - z_{t}\|^2 - \|z - z_{t+1}\|^2  \right]\\
    &\leq \frac{1}{2 \alpha} \left[  \|z - z_{t}\|^2 - \|z - z_{t+1}\|^2  \right].
\end{align*}
 Taking $z$ as $z^{\star}$, we have
 \begin{equation*}
     f(z^{\star}) -  f(z_{t+1}) \leq \frac{1}{2 \alpha} \left[  \|z^{\star} - z_{t}\|^2 - \|z^{\star} - z_{t+1}\|^2  \right].
 \end{equation*}
\end{proof}

By Lemma \ref{lmm:prox}, for $\forall k \in [K]$, under Assumption \ref{asmp:ft_cvx}, suppose $x_k^{\star} \in \arg\max_{x \in \mathcal{D}}  f_{t_k} (x)$, we have
\begin{align*}
    &\sum_{i=1}^{n_k} \max_{x \in \mathcal{D}}  f_{t_k} (x) - f_{t_k} (\tilde x_{k,i}) \\
    &\leq \max_{x \in \mathcal{D}}  f_{t_k} (x) - f_{t_k} (\tilde x_{k,1}) + \frac{1}{2 \alpha} \|x_k^{\star} - \tilde x_{k,1}\|^2 \\
    &\leq 2B + \frac{2 R^2}{\alpha}.
\end{align*}
\end{document}